\newtheorem{observation}{Observation}{\bfseries}{\itshape}
\newtheorem{theorem}{Theorem}{\bfseries}{\itshape}
\newtheorem{lemma}{Lemma}{\bfseries}{\itshape}
\newtheorem{corollary}{Corollary}{\bfseries}{\itshape}
\author[1]{Sarah Carmesin}
\author[1]{Andr\'e Schulz}
\affil[1]{FernUniversit\"at in Hagen}
\title{Arrangements of orthogonal circles with many intersections}
\begin{document}
%


%
%

%

%
\maketitle              
\begin{abstract}
An arrangement of circles in which circles intersect only
in angles of $\pi/2$ is called an 
\emph{arrangement of orthogonal circles}. We show that in 
the case that no two circles are nested, the intersection
graph of such an arrangement is planar. The same result holds 
for arrangement of circles that intersect in an angle of at most $\pi/2$.

For the general case we prove that the maximal number of
edges in an intersection graph of an arrangement of orthogonal circles lies in between 
$4n - O\left(\sqrt{n}\right)$ and $\left(4+\frac{5}{11}\right)n$, for
$n$ being the number of circles. Based on the lower bound we
can also improve the bound for the number of triangles in 
arrangements of orthogonal circles to $(3 + 5/9)n-O\left(\sqrt{n}\right)$.

\end{abstract}

\section{Introduction}
A collection of $n$ circles in the plane, is called an \emph{arrangement of orthogonal
circles} if any two intersecting circles intersect orthogonally. Here, 
we call an intersection orthogonal, if the tangents at the intersection
point form an angle of $\pi/2$. By definition circles cannot touch 
in an arrangement of orthogonal
circles.

A natural object that arises from  an arrangement of orthogonal
circles is its intersection graph.
A graph $G$ is a \emph{(geometric) intersection graph} if its vertices can be realized
by a set of geometric objects, such that two objects intersect if and only if their 
corresponding vertices form an edge in $G$. Thus, for an arrangement of orthogonal
circles $\mathcal{A}$ we define its intersection graph $G(\mathcal{A})$ as the graph,
whose vertices correspond to the circles in $\mathcal{A}$ and two vertices are adjacent, 
if and only if the associated circles intersect in $\mathcal{A}$. The graph $G(\mathcal{A})$
is called an \emph{orthogonal circle intersection graph}.

Arrangements of orthogonal circles and their intersection graphs were recently introduced
by Chaplick et al.~\cite{cfkw19}. Here it was shown that the intersection graph of $n$ circles contains
at most $7n$ edges. Furthermore, it is {\sf NP}-hard
to test whether a graph is an orthogonal unit circle intersection graph.
Chaplick et al. also provide bounds for the maximal number of digonal, triangular and 
quadrilateral cells in arrangements of orthogonal circles.

\paragraph{Related work.}
General (non-orthogonal) arrangements of circles or disks have been studied extensively before.
Giving a complete overview over the results in this field is out of scope 
for this article. We will hence only mention a few selected results.
For the special case where all circles have the same radius the intersection
graphs are known as \emph{unit disk graphs}.
For general arrangements of circles or balls the recognition problems for the corresponding intersection graphs are usually hard (for example for unit disk graphs~\cite{bk98}).
We refer the reader to  the survey of Hlinen{\'{y}} and Kratochv{\'{\i}}l~\cite{hk01} for more information.  Other work focused on bounding the number of small faces in arrangements of circles~\cite{alps01}
or about the circleability of topologically described arrangements~\cite{km14}~\cite{fs18}.

Note that we can have general circle arrangements in which all circles pairwise intersect. Thus,
the density of the intersection graph can be $\Theta(n^2)$, although many graphs
are not intersection graphs of circle arrangements~\cite{mm14} (for example
every graph containing $K_{3,3}$ as a subgraph~\cite{hk01}). Hence, asking for
the maximum density for intersection graphs in this setting is not an interesting question.

If the circles are allowed to only intersect pairwise in one point, then the intersection
graph is called a \emph{contact graph} and the corresponding arrangement is a circle
packing. Due to the famous Andreev--Koebe--Thurston circle packing
theorem~\cite{a-cpls-70,Koebe36} the disk contact graphs coincide with the planar graphs. 
One direction of the circle packing theorem is obvious, a planar straight-line
drawing of the contact graph can be derived by placing the vertices at the disk centers. A 
related result is due to Alon et al.~\cite{alps01}. A \emph{lune}
is a digonal cell in an arrangement of circles. If we restrict the intersection
graph of the (general) circle arrangement to intersections that are formed by lunes 
(we call
this the \emph{lune-graph}) 
then also in this setting we can obtain a planar straight-line drawing by placing the vertices at the 
circle centers.

Every arrangement of orthogonal circles with the same radius can be turned into a unit circle
packing by shrinking the circle size by a factor of $\sqrt{2}/2$, but there are unit disk
contact graphs that are not intersection graphs of an arrangement of orthogonal circles~\cite{cfkw19}.

A well established quality criteria for drawing graphs is to avoid crossings. However,
crossings with large angles are considered less problematic~\cite{hhe08}. For this
reason graphs that can be drawn with right-angle crossings (known as RAC-drawings)
are considered an interesting class from a graph drawing perspective. 
It was shown that graphs that have straight-line RAC-drawings have at most $4n-10$ 
edges~\cite{del11}, for $n$ being the number of vertices.
Recently this approach was carried 
over to drawings with circular arcs that can intersect in right angles only. 
Chaplick et al. showed that graphs that have circular arc RAC-drawings can have at most $14n-12$ edges
and there are such graphs with $4.5n-O(\sqrt{n})$ edges~\cite{cfkw20}.  

Orthogonal circle arrangements can also be seen as circular arc drawings (of 4-regular graphs) with
perfect angular resolution. Such drawings are known as Lombardi drawings and have been
studied deeply~\cite{degkn12,e14,KKL+19}.
 
\paragraph{Results.} We prove bounds for the maximal number of edges in an 
intersection graph of an arrangement of $n$ orthogonal circles. 
We 
show an upper bound of $\left(4+\frac{5}{11}\right)n$
and present a lower bound of $4n - O\left(\sqrt{n}\right)$. As a crucial intermediate result we show
that in the case of arrangements without nested circles, the intersection graph is planar. In particular,
(in a similar vein to disk contact graphs and lune graphs) 
we obtain a planar straight-line drawing by placing the vertices at the centers of the corresponding 
circles. 
As an immediate consequence we get that for
arrangements of nonnested orthogonal circles the intersection graph has at most $3n-6$ edges.
We can refine the analysis to improve this bound to $3n-8$.
This bound is tight, since we can show a matching lower bound.
Our lower bound constructions can be slightly modified to also improve the bounds for 
the maximal number of triangular cells in arrangements of orthogonal circles
to $(3 + 5/9)n-O\left(\sqrt{n}\right)$.

\paragraph{Organization.}
We first prove in Section~\ref{sec:nonnest} that the orthogonal circle intersection graphs 
are planar in the nonnested case. In~Section~\ref{sec:uppernest} we extend our ideas to 
general circle arrangements and prove the upper bound. In 
Section~\ref{sec:lowernest} we discuss\
lower bound constructions.

\section{Bounds for nonnested arrangements}\label{sec:uppernonnest}
\label{sec:nonnest}

For an arrangement of orthogonal circles we call the straight-line drawing of its
intersection graph that is obtained by placing the vertices on the corresponding 
circle
centers the \emph{embedded intersection graph}. \autoref{fig:exNonested} 
depicts such an arrangement and its embedded intersection graph.
In this section we prove that the embedded intersection graph is noncrossing. 

   \begin{figure} [htpb]   
\begin{center}
    	  \includegraphics[scale=0.7]{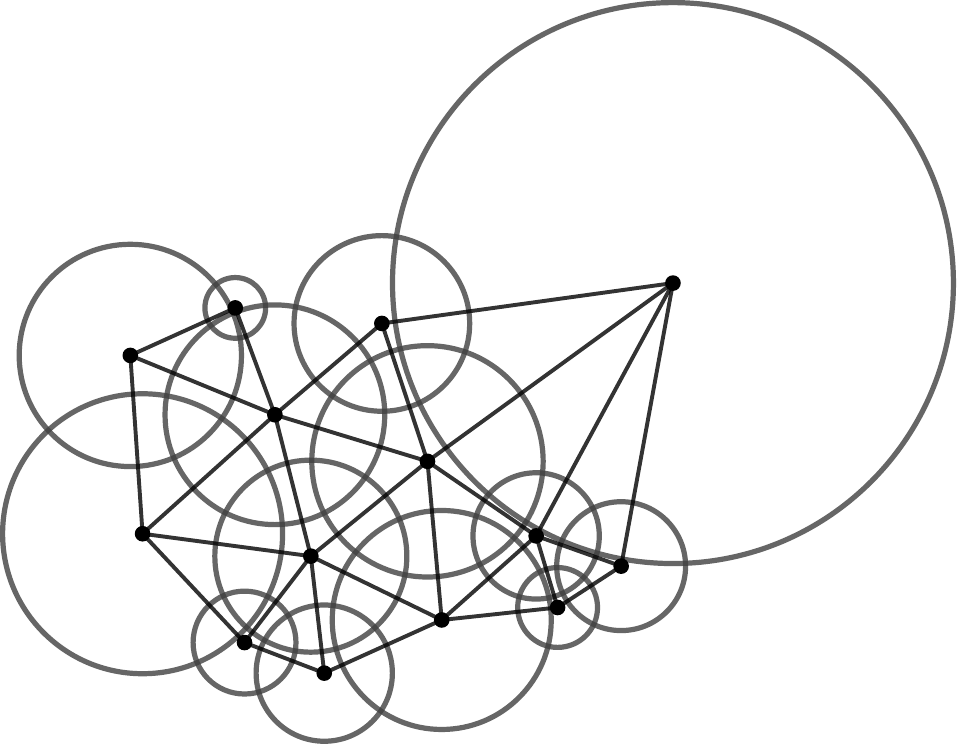}
   	  \caption{A nonnested circle arrangement and its embedded 
intersection graph.
}\label{fig:exNonested}
\end{center}
   \end{figure}

We start with properties of arrangements of two or three nonnested orthogonal circles. The first observation is a simple application of Pythagoras' theorem.

\begin{observation}
\label{distance}
  Let $A$ and $B$ be two circles with centers $C_A$ and $C_B$ and radii $r_A$ and $r_B$, respectively. Then $A$ and $B$ are orthogonal if and only if $|C_AC_B|^2=r_A^2+r_B^2$.
\end{observation}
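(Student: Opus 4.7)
The plan is to turn the orthogonality condition on tangents into a perpendicularity condition on radii, and then recognize this as exactly the hypothesis of (the converse of) the Pythagorean theorem applied to the triangle $C_A P C_B$, where $P$ is an intersection point.

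For the forward direction, I would first observe that since the tangent to a circle at a point is perpendicular to the radius to that point, the tangents of $A$ and $B$ at an intersection point $P$ are perpendicular if and only if the two radii $C_A P$ and $C_B P$ are perpendicular. Assuming $A$ and $B$ are orthogonal, the triangle $C_A P C_B$ has a right angle at $P$, so Pythagoras gives $|C_A C_B|^2 = |C_A P|^2 + |C_B P|^2 = r_A^2 + r_B^2$.

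For the converse, I first need to know that the circles actually intersect, so that there is a point $P$ to invoke. This is where the only mild bookkeeping step occurs: I would verify $|r_A - r_B| \le |C_A C_B| \le r_A + r_B$ by squaring, using $(r_A \pm r_B)^2 = r_A^2 \pm 2 r_A r_B + r_B^2$ and the assumption $|C_A C_B|^2 = r_A^2 + r_B^2$ together with $r_A, r_B > 0$. Given an intersection point $P$, the triangle $C_A P C_B$ has side lengths $r_A$, $r_B$, and $\sqrt{r_A^2 + r_B^2}$, so the converse of Pythagoras yields $\angle C_A P C_B = \pi/2$, which by the tangent--radius correspondence above is the orthogonality of $A$ and $B$.

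I do not expect any real obstacle here; the statement is essentially a one-line consequence of Pythagoras, and the only thing requiring care is ensuring in the converse direction that the distance condition forces the circles to meet (so that the Pythagorean triangle actually exists), which the inequality check above handles.
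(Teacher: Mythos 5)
Your proof is correct and follows exactly the route the paper intends: the paper offers no written proof for this observation, merely noting it is ``a simple application of Pythagoras' theorem,'' which is precisely your argument via the right triangle $C_A P C_B$ at an intersection point $P$. Your additional check in the converse direction that $|r_A-r_B| < |C_AC_B| < r_A+r_B$ forces the circles to intersect is a worthwhile piece of care that the paper leaves implicit.
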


\begin{lemma}\label{lem1}
 In an arrangement of nonnested orthogonal circles, the center of a circle $A$ is not contained inside a circle other than $A$.
\end{lemma}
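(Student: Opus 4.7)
The plan is to fix an arbitrary circle $B \neq A$ with center $C_B$ and radius $r_B$, and show that $|C_A C_B| > r_B$. I would split into two cases according to whether $A$ and $B$ intersect.

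In the first case, suppose $A$ and $B$ intersect. Since the arrangement is orthogonal, Observation~\ref{distance} gives $|C_A C_B|^2 = r_A^2 + r_B^2$. Because $r_A > 0$, this immediately yields $|C_A C_B| > r_B$, so $C_A$ lies strictly outside $B$.

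In the second case, suppose $A$ and $B$ are disjoint. Since the arrangement is nonnested, neither circle lies in the interior of the other, and the paper's definition explicitly excludes tangencies. Hence $A$ and $B$ are externally disjoint, which means $|C_A C_B| > r_A + r_B \geq r_B$, so again $C_A$ is strictly outside $B$.

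Combining the two cases, no other circle $B$ contains $C_A$ in its interior. There is no real obstacle here; the only thing one must be careful about is the nonnested hypothesis, which is exactly what rules out the otherwise-possible scenario $|C_A C_B| < r_B - r_A$ with $A$ lying strictly inside $B$. The orthogonality condition, expressed via Pythagoras in Observation~\ref{distance}, forces the two centers to be sufficiently far apart in the intersecting case, and nonnestedness (together with the no-touching convention) handles the disjoint case.
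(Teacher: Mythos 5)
Your proof is correct and uses essentially the same two ingredients as the paper's: nonnestedness to dispose of the disjoint case, and Observation~\ref{distance} (Pythagoras) to force $|C_AC_B| > r_B$ in the intersecting case. The paper merely packages this as a single contradiction argument (assume $C_A$ inside $B$, note the circles must then intersect, and derive $r_A^2 + r_B^2 < r_B^2$), whereas you present it as a direct case split; the substance is identical.
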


\begin{proof}
 Let $A$ and $B$ be two nonnested circles with centers $C_A$ and $C_B$ and radii $r_A$ and $r_B$, respectively. Assume that $C_A$ lies inside $B$.  Obviously, $A$ and $B$ intersect, since otherwise the circles are nested.
 Since $A$ and $B$ intersect orthogonally it holds that $|C_AC_B|^2=r_A^2+r_B^2$. Further, since $C_A$ is in $B$, we have $|C_AC_B|<r_B$ and thus $|C_AC_B|^2<r_B^2$.  We get that  $r_A^2+r_B^2 < r_B^2$, which is a contradiction for $r_A,r_B \in \mathbb R$.
 \end{proof}

\begin{lemma}\label{lem2}
 In an arrangement of nonnested orthogonal circles, for every pair of circles $A$ and $B$ and every point $p$ on $A$ it holds that $B$ intersects the line segment $C_Ap$ in at most one point.
\end{lemma}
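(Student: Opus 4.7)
My plan is to split into cases based on the relationship between $A$ and $B$. If $A=B$, the segment $C_Ap$ is a radius and meets $A$ only at $p$. If $A$ and $B$ are disjoint (as circles), then by nonnestedness the closed disk bounded by $A$ is disjoint from $B$, so the segment $C_Ap$ does not meet $B$ at all. The interesting case is therefore $A\neq B$ with the two circles intersecting orthogonally, which I will treat by analyzing the entire line $\ell$ through $C_A$ and $p$ and showing that at most one of its intersections with $B$ can lie within distance $r_A$ of $C_A$.

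For this case let $F$ be the foot of the perpendicular from $C_B$ onto $\ell$. Since $C_A\in\ell$, the triangle $C_AFC_B$ is right-angled at $F$, so by the Pythagorean theorem together with Observation~\ref{distance},
\[
|C_AF|^2 + |FC_B|^2 \;=\; |C_AC_B|^2 \;=\; r_A^2 + r_B^2.
\]
If $\ell$ meets $B$ in two points $q_1,q_2$, then $F$ is the midpoint of the chord $q_1q_2$ and $|FC_B|^2 + s^2 = r_B^2$, where $s:=|Fq_1|=|Fq_2|$. Combining the two identities eliminates $|FC_B|^2$ and yields the key relation $|C_AF|^2 = r_A^2 + s^2$, so in particular $|C_AF|\geq r_A$, with equality only when $\ell$ is tangent to $B$.

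Parameterizing $\ell$ so that $C_A$ sits at coordinate $0$ and $F$ at signed coordinate $t_F$ with $|t_F|=|C_AF|$, the points $q_1,q_2$ have coordinates $t_F\pm s$, so their distances from $C_A$ are $|t_F+s|$ and $|t_F-s|$, whose maximum equals $|C_AF|+s=\sqrt{r_A^2+s^2}+s$. This strictly exceeds $r_A$ whenever $s>0$, while the degenerate case $s=0$ (where $\ell$ is tangent to $B$) gives only one intersection point to begin with. In either subcase at most one point of $\ell\cap B$ lies within distance $r_A=|C_Ap|$ of $C_A$, and therefore at most one lies on the segment $C_Ap$. The whole argument is essentially a direct unfolding of Observation~\ref{distance} combined with Pythagoras applied to $C_AFC_B$, so I do not expect any real obstacle beyond being careful with the tangency/disjoint degenerate configurations already handled above.
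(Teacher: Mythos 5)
Your proof is correct, but it takes a genuinely different route from the paper's. The paper argues synthetically: it first invokes \autoref{lem1} to place $C_B$ outside $A$, deduces that $B$ must cross $A$ in some point $u$, and then uses Thales' theorem on the circle with diameter $C_AC_B$ (which contains both $u$ and the chord midpoint $t$) to conclude $|C_At|>|C_Au|=r_A$, contradicting $t$ lying on the segment. You instead work purely algebraically: dropping the foot $F$ of the perpendicular from $C_B$ onto the line (which coincides with the paper's chord midpoint $t$), you combine Pythagoras in $C_AFC_B$ with the orthogonality relation of Observation~\ref{distance} and the chord relation to get $|C_AF|^2=r_A^2+s^2$, and then observe that the farther of the two line--circle intersection points is at distance $|C_AF|+s>r_A$. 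This buys you two things: your main case needs neither \autoref{lem1} nor the existence of the intersection point $u$, only the distance relation $|C_AC_B|^2=r_A^2+r_B^2$, and you handle the tangent and disjoint configurations explicitly rather than leaving them implicit. Amusingly, your computation is essentially the one the paper itself deploys later in the proof of \autoref{lem3} (where it derives $|C_At|^2=r_A^2+(|qs|/2)^2$), so your version makes the two proofs uniform. The only mild caveat is that the lemma as stated quantifies over every pair of circles in the arrangement, so your explicit case split (identical, disjoint-and-nonnested, orthogonal) is needed for completeness and is handled correctly; the nonnestedness hypothesis enters only in the disjoint case, exactly as you use it.
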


\begin{proof}
 Let $A$ and $B$ be two circles with centers $C_A$ and $C_B$ and radii $r_A$ and 
$r_B$, respectively. Assume that there exist a point $p$ on $A$ such that $B$ 
intersects $C_Ap$ twice. We call these intersection points $q$ and $s$ with 
$|C_Aq|<|C_As|<r_A$ and denote the midpoint between $q$ and $s$ with $t$. 
 By \autoref{lem1} $C_B$ lies outside of $A$. So, for the circle $B$ to have a point inside of  $A$,  $B$ has to intersect $A$ in some point $u$ (see \autoref{fig:lem2}). 
 Since the circles intersect orthogonally, $C_AuC_B$ is a right triangle. 
Further, since $qs$ is a chord of the circle $B$, the triangle $sqC_B$ is 
isosceles and its height is $C_Bt$. Thus, we have a right angle at $t$ between 
$C_Ap$ and $C_Bt$ and  $C_AtC_B$ is  a right triangle.

   \begin{figure} [htpb]   
\begin{center}
    	  \includegraphics[scale=0.6]{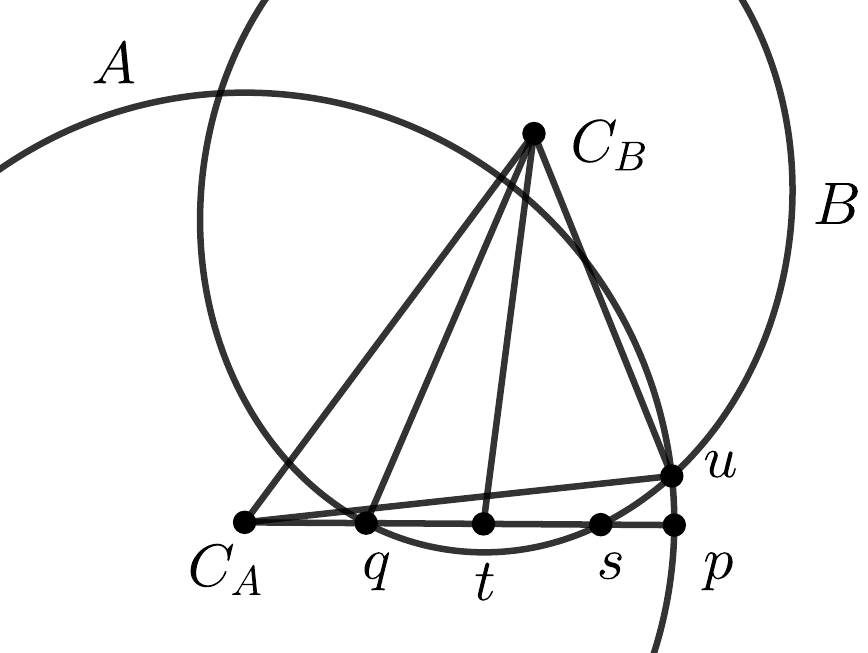}
   	  \caption{Illustration of the construction in the proof of \autoref{lem2}.
}\label{fig:lem2}
\end{center}
   \end{figure}

Since the right triangles $C_AtC_B$ and $C_AuC_B$ share the same hypotenuse 
$C_AC_B$  we get by 
Thales' theorem that 
$t$ and $u$ have to be on the circle with diameter $C_AC_B$.
We know that $|C_Bu|=r_B$ and $|C_Bt|<|C_Bq|=r_B$. Hence, $t$ lies closer to $C_B$ than $u$. It follows that $u$ is closer to $C_A$ than $t$, thus $C_At>C_Au=r_A$. This implies that $t$ is outside of $A$, which is a contradiction.
\end{proof}

\begin{lemma}\label{lem3}
 In an arrangement of nonnested orthogonal circles, for every intersecting pair of circles $A$ and $B$ there is no third circle 
 that intersects the line segment between the centers of $A$ and $B$.
\end{lemma}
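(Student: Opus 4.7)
The plan is to argue by contradiction: assume a third circle $D$ with center $C_D$ and radius $r_D$ meets the segment $C_AC_B$. Since the arrangement is nonnested, $D$ must be either disjoint from or orthogonal to each of $A$ and $B$, so I will split into cases accordingly.

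The first step is a \emph{covering} property of the segment. Letting $p$ and $q$ be the intersections of $A$ and $B$ with $C_AC_B$, the Observation gives $|C_AC_B|^2 = r_A^2 + r_B^2 < (r_A+r_B)^2$, hence $|C_Aq| = |C_AC_B| - r_B < r_A = |C_Ap|$. So $q$ precedes $p$ along the segment, and every point of $C_AC_B$ lies in the closed disk of $A$ or in the closed disk of $B$. This immediately rules out the case where $D$ is disjoint from both, since such a $D$ would be confined to the complement of these two closed disks and could not meet the segment.

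The most delicate case is when $D$ is orthogonal to both $A$ and $B$. Subtracting the two orthogonality relations $|C_DC_A|^2 = r_D^2 + r_A^2$ and $|C_DC_B|^2 = r_D^2 + r_B^2$ gives $|C_DC_A|^2 - |C_DC_B|^2 = r_A^2 - r_B^2$, which places $C_D$ on the radical axis of $A$ and $B$. This axis is the perpendicular to $C_AC_B$ through the point $M$ with $|MC_A| = r_A^2/|C_AC_B|$, and since $|C_AC_B| > r_A$ we have $|MC_A| < r_A$. Substituting $|C_DC_A|^2 = |C_DM|^2 + |MC_A|^2$ into the orthogonality relation yields $r_D^2 = |C_DM|^2 - \bigl(r_A^2 - |MC_A|^2\bigr) < |C_DM|^2$. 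Hence $r_D < |C_DM|$, meaning $D$ does not even reach the line through $C_AC_B$, contradicting the assumption.

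It remains, by symmetry, to rule out $D \perp A$ with $D$ disjoint from $B$. Then $D$ lies outside the closed disk of $B$, so any intersection of $D$ with $C_AC_B$ must lie in $[C_A, q] \subseteq [C_A, p]$; \autoref{lem2} applied to $A, D$ with the point $p$ allows at most one such intersection. Since both endpoints of the segment are outside $D$ by \autoref{lem1}, a single intersection would have to be a tangency at some $x \in [C_A, q]$. But tangency forces $C_Dx \perp C_AC_B$ with $|C_Dx| = r_D$, and combined with $|C_DC_A|^2 = r_D^2 + r_A^2$ this yields $|xC_A| = r_A$, i.e.\ $x = p$, which is impossible because $q < p$. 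I expect this final tangency subcase to be the main obstacle; the rest reduces cleanly to the Pythagorean identity together with Lemmas~\ref{lem1} and~\ref{lem2}.
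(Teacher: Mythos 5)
Your proof is correct, but it takes a genuinely different route from the paper's. The paper never splits into cases: it first shows that a single crossing of the segment would trap $C_A$ or $C_B$ inside $D$ (\autoref{lem1}), so $D$ must cross twice; \autoref{lem2} then places one crossing point in $A$ and the other in $B$, forcing $D$ to be orthogonal to both; finally a Pythagoras computation at the midpoint $t$ of the chord gives $|C_At|^2 = r_A^2 + (|qs|/2)^2 > r_A^2$ and symmetrically $|C_Bt| > r_B$, contradicting $|C_At|+|C_Bt| = |C_AC_B| = \sqrt{r_A^2+r_B^2} < r_A+r_B$. You instead stratify by whether $D$ is orthogonal to or disjoint from each of $A$ and $B$. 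Your central case ($D$ orthogonal to both) is handled by placing $C_D$ on the radical axis and computing $r_D^2 = |C_DM|^2 - (r_A^2 - |MC_A|^2) < |C_DM|^2$; this is arguably more conceptual and proves something strictly stronger, namely that a common orthogonal circle of two intersecting circles cannot even reach the \emph{line} through their centers. The price is the two extra cases the paper never has to confront: the doubly-disjoint case (dispatched by your covering observation, which is fine since non-intersecting nonnested circles in this setting have disjoint closed disks) and the mixed case, where your combination of the covering bound $|C_Aq| < r_A$, \autoref{lem2}, the parity/tangency argument (legitimate because both $C_A$ and $C_B$ are strictly outside $D$), and the identity $|xC_A| = r_A$ at a tangency point all check out. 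Both proofs ultimately rest on the same power-of-a-point computation; the paper's version is shorter because it derives double orthogonality rather than assuming it case by case, while yours isolates the geometric reason ($C_D$ sits too far out on the radical axis) more transparently.
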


\begin{proof}
Let $A$, $B$ and $D$ be three circles with centers $C_A$, $C_B$ and $C_D$ and radii $r_A$, $r_B$ and $r_D$, respectively. The circles $A$ and $B$ intersect. Assume for
a contradiction that the circle $D$ intersects the line between $C_A$ and $C_B$.

If the circle $D$ intersects the line segment between $C_A$ and $C_B$ just once, either $C_A$  or $C_B$ would be inside $D$, a contradiction to \autoref{lem1}. Thus, $D$ has to intersect the line segment $C_AC_B$ twice.
We denote these intersection points by $q$ and $s$ with $|C_Aq|<|C_As|<|C_AC_B|$ and midpoint between $q$ and $s$ by $t$. 
Due to \autoref{lem2} $q$ and $s$ cannot lie in the same circle, so one lies 
in $A$ and the other in $B$. Thus, $D$ intersects both $A$ and $B$.
By \autoref{lem1} the center $C_D$ of  $D$ has to be outside of the circles $A$ and $B$. Thus, for the circle $D$ to have a point in the inside of the circles $A$ and $B$, the circle $D$ has to intersect the circles $A$  (in some point $u_A$) and $B$ (in some point $u_B$). The situation is depicted in \autoref{fig:lem3}.

\begin{figure} [htpb]   
\begin{center}
    	  \includegraphics[scale=0.66]{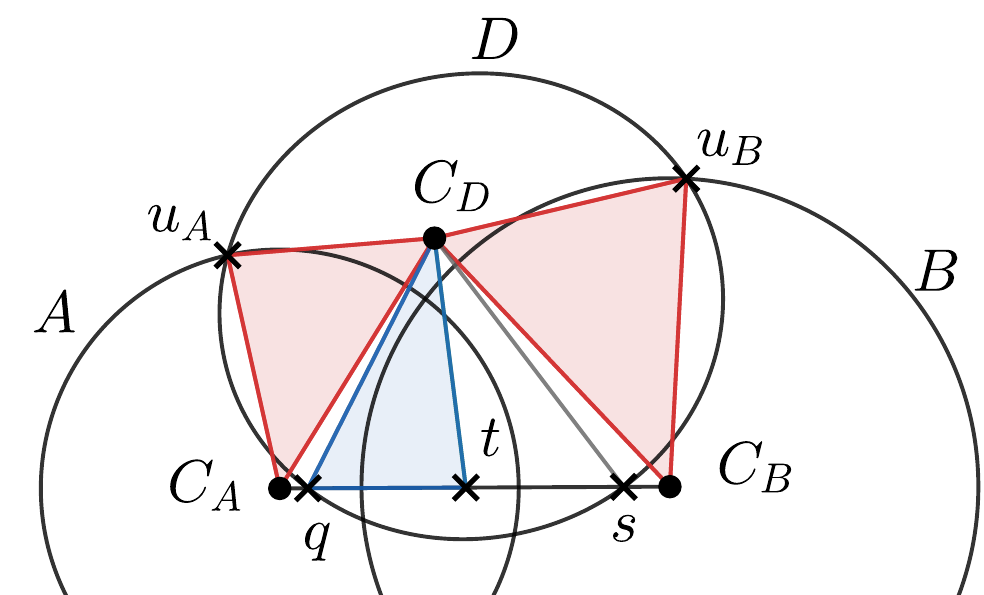}
   	  \caption{Illustration of the construction in the proof of \autoref{lem3}.
}\label{fig:lem3}
\end{center}
\end{figure}

Since all circles intersect orthogonally we have right angles at $u_A$ between $C_Au_A$ and $C_Du_A$ and  at $u_B$ between $C_Bu_B$ and $C_Du_B$. Also, since $sq$ is a chord of the circle $D$, the triangle $sqC_D$ is isosceles and its height is $C_Dt$. Thus, we have a right angle at $t$ between $C_AC_B$ and $C_Dt$.
This gives us five right triangles $C_AC_Du_A$ and $C_BC_Du_B$ (red), $qC_Dt$ 
(blue), $C_BC_Dt$ and $C_AC_Dt$. We obtain

\begin{gather*}
    |C_AC_D|^2 =r_A^2+r_D^2, \quad 
    |C_BC_D|^2 =r_B^2+r_D^2, \quad |C_BC_D|^2 = |C_Bt|^2+|C_Dt|^2 \\
        |C_AC_D|^2 = |C_At|^2+|C_Dt|^2, \qquad
r_D^2 =\left(\frac{|qs|}{2}\right)^2+|C_Dt|^2.  
  \end{gather*}

Combining these equations we get 

\begin{align*}
	|C_At|^2 & = |C_At|^2 + |C_Dt|^2 - |C_Dt|^2 = |C_AC_D|^2 - |C_Dt|^2 =r_A^2+r_D^2 - |C_Dt|^2\\
	& = r_A^2 +\left(\frac{|qs|}{2}\right)^2. 
\end{align*}

It follows that $|C_At| > r_A$. By a symmetric argument we see also that $|C_Bt| > r_B$.
We get   $|C_At|+|C_Bt| > r_A + r_B$, which is a contradiction.
%
%
 %
\end{proof}

%
%
%
%
%
%


We can now combine our observations to prove the following result.

\begin{theorem}\label{thm:main}
 The embedded intersection graph of an arrangement of nonnested orthogonal circles is noncrossing.
\end{theorem}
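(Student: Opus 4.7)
The plan is a proof by contradiction. Suppose two edges of the embedded intersection graph cross in their relative interiors. If the two edges shared an endpoint, both segments would emanate from a common center and hence could only meet at that center, unless they overlapped along a ray; overlap is ruled out, since it would force a third center to lie on the segment between the centers of an intersecting pair, contradicting \autoref{lem1} and \autoref{lem3} together. Hence I may assume the two crossing edges are $C_A C_B$ and $C_C C_D$ for four distinct circles $A, B, C, D$, meeting at an interior point $x$ of both segments.

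The key step will be to show that $x$ lies outside every one of the four circles. I would apply \autoref{lem3} to the edge $C_A C_B$ (using that $A$ and $B$ intersect), concluding that neither $C$ nor $D$ meets the segment $C_A C_B$. Combined with \autoref{lem1}, which places $C_A$ and $C_B$ outside both $C$ and $D$, the segment is a continuous curve whose endpoints lie in the exterior of the disk bounded by $C$ (resp.\ $D$) and which never touches its boundary, so the whole segment, and in particular $x$, lies outside $C$ and $D$. The symmetric argument applied to the edge $C_C C_D$ forces $x$ outside $A$ and $B$ as well.

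From here the contradiction reduces to a single line. Because $x$ lies strictly between $C_A$ and $C_B$ and strictly outside both $A$ and $B$, additivity along the segment gives $|C_A C_B| = |C_A x| + |C_B x| > r_A + r_B$, while \autoref{distance} yields $|C_A C_B|^2 = r_A^2 + r_B^2 < (r_A + r_B)^2$, so $|C_A C_B| < r_A + r_B$. This is the desired contradiction.

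I do not expect a serious obstacle here: Lemmas~\ref{lem1}, \ref{lem2}, and \ref{lem3} have done the real work, and the final inequality is essentially the observation that the hypotenuse of a right triangle is shorter than the sum of its legs. The only place needing a light touch is the middle step, where one must combine \autoref{lem3} with the elementary topological fact that a straight segment starting and ending outside a disk and never crossing its boundary must stay outside the disk throughout; I would state this explicitly rather than leave it implicit, since without it one has only shown the endpoints of the segment are outside the other circles, not the crossing point itself.
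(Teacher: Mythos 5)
Your proof is correct and uses exactly the same ingredients as the paper's: Lemma~\ref{lem1}, Lemma~\ref{lem3}, and the fact that the segment between the centers of two orthogonally intersecting circles is covered by the union of the two disks (equivalently $|C_AC_B|<r_A+r_B$). You merely run the contradiction in the contrapositive order --- the paper places the crossing point inside one circle of the second pair and then violates Lemma~\ref{lem3}, while you use Lemma~\ref{lem3} to push the crossing point outside all four circles and then violate the covering property --- and your explicit handling of the shared-endpoint case is a small bonus in rigor.
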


\begin{proof}
Suppose for contradiction four circles $A,B,D,E$ with centers $C_A,C_B,C_D$ and $C_E$ that are arranged in such way that their embedded intersection graph has two edges $C_AC_B$ and $C_DC_E$ that cross in the point $h$. 
This means we have two pairs of intersecting circles $A,B$ and $D,E$.
Note that  $C_AC_B$ is contained in the union of $A$ and $B$. 
Hence, $h$ has to lie in at least one of the circles $A$ or $B$. By the same reasoning
$h$ also has to lie in at least one of the circles $D$ or $E$. Without loss of generality
we can assume that $h$ lies in $D$.
By \autoref{lem1} the circle $D$ cannot enclose $C_AC_B$ completely, thus it has to intersect the line segment $C_AC_B$. This, however, contradicts \autoref{lem3}.
\end{proof}

By \autoref{thm:main} the intersection
graph is planar and we can further show that the boundary face  of the embedded intersection 
graph is at least a pentagon if we have five or more circles (Lemma~\ref{lem:pentagon} in Appendix~\ref{app:omitted2}). Applying Euler's formula yields the following result.

\begin{corollary}
\label{cor:1}
The intersection graph of an arrangement of nonnested orthogonal circles has at most $3n-8$ edges
for $n\ge 5$. 
\end{corollary}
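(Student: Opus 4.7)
The plan is to combine three ingredients already at our disposal: the planarity of the embedded intersection graph (\autoref{thm:main}), the pentagon lemma from the appendix, and Euler's formula via the standard edge--face double counting.

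First I would note that by \autoref{thm:main} the embedded intersection graph is a simple planar graph on $n$ vertices. Fix its planar embedding induced by placing each vertex at the center of its corresponding circle, and let $m$ be the number of edges and $f$ the number of faces. By the pentagon lemma (Lemma~\ref{lem:pentagon}), for $n\ge 5$ the boundary walk of the outer face uses at least $5$ edges, while every other face has boundary length at least $3$ since the graph is simple and contains no loops or multi-edges.

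The next step is the usual double counting of edge--face incidences. Each edge is counted on the boundary of exactly two faces, so summing boundary lengths gives $2m\ge 3(f-1)+5 = 3f+2$, hence $f\le (2m-2)/3$. Assuming for the moment that the graph is connected, Euler's formula yields $f = 2 - n + m$, and plugging this into the inequality above gives
\begin{equation*}
2 - n + m \;\le\; \frac{2m-2}{3},
\end{equation*}
which simplifies to $m\le 3n-8$, as desired.

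Finally I would briefly handle the case where the embedded intersection graph is disconnected. Here Euler's formula reads $n - m + f = 1 + c$ for $c\ge 2$ connected components, and the same double-counting argument then gives $m \le 3n - 5 - 3c \le 3n - 11 < 3n - 8$, so the bound is only stronger. There is no real obstacle in this proof: all the geometric content is absorbed into \autoref{thm:main} and the pentagon lemma, and what remains is the routine Euler-formula calculation; the only thing to be careful about is that the pentagon lemma applies to the outer face of the full embedding (not just a single component) so that the edge-count inequality $2m \ge 3f+2$ is valid regardless of connectivity.
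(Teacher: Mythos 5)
Your proof is correct and follows exactly the route the paper takes: planarity from \autoref{thm:main}, the pentagonal outer face from Lemma~\ref{lem:pentagon}, and the standard edge--face double counting with Euler's formula; you merely write out the computation (and the disconnected case) that the paper leaves implicit.
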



In Section~\ref{sec:lowernest}  we show that the bound of $3n-8$ in \autoref{cor:1} is tight.
In Appendix~\ref{app:acute}
we show that Theorem~\ref{thm:main} also holds when all circles intersect in a (not necessarily
identical) angle of at most $\pi/2$.

\section{Bounds for general orthogonal arrangements}
\label{sec:uppernest}

In this section we prove an upper bound of $\left(4+\frac{5}{11}\right)n$ edges for intersection graphs of orthogonal circle arrangements with nested circles.
We first discuss the general approach and introduce necessary terminology before 
continuing with details and proofs.

For every circle $C$ in an arrangement $\mathcal A$ we define its \emph{depth} $t(C)$ as 
the maximum cardinality of a set of pairwise nested circles in $\mathcal A$ that are properly 
contained in $C$. 
A circle with depth $0$, i.e., it contains no circles properly, is referred to as \emph{shallow}
otherwise as \emph{deep}.

As a first step we show that in every arrangement we can find a circle with depth at most $1$ that is 
orthogonal to at most seven deep circles (\autoref{7er}). We select one circle with
this property and name it the 
\emph{red circle}. We then look at the circles properly contained in the red circle. We call
these circles \emph{black circles}; see \autoref{fig:8n}. 
The key observation is that we can delete the set of black circles from the arrangement and 
 by doing so we only lose few edges from the intersection graph, i.e.  at most $(4+ 5/11) \cdot n$ 
for $n$ black circles. To obtain this bound we distinguish between intersections between the black circles and intersections between a black circle and a circle that intersects a black and the red
circle (such circles are called \emph{green circles}).
To make our analysis work we have to partition the black circles further. 
If a black circle center lies on the boundary of the embedded intersection graph induced by
the vertices of the black circles we call the corresponding circle 
\emph{boundary black circle}, otherwise \emph{inner black circle}.


 \begin{figure} [htpb]   
\begin{center}
    	  \includegraphics[height=4cm]{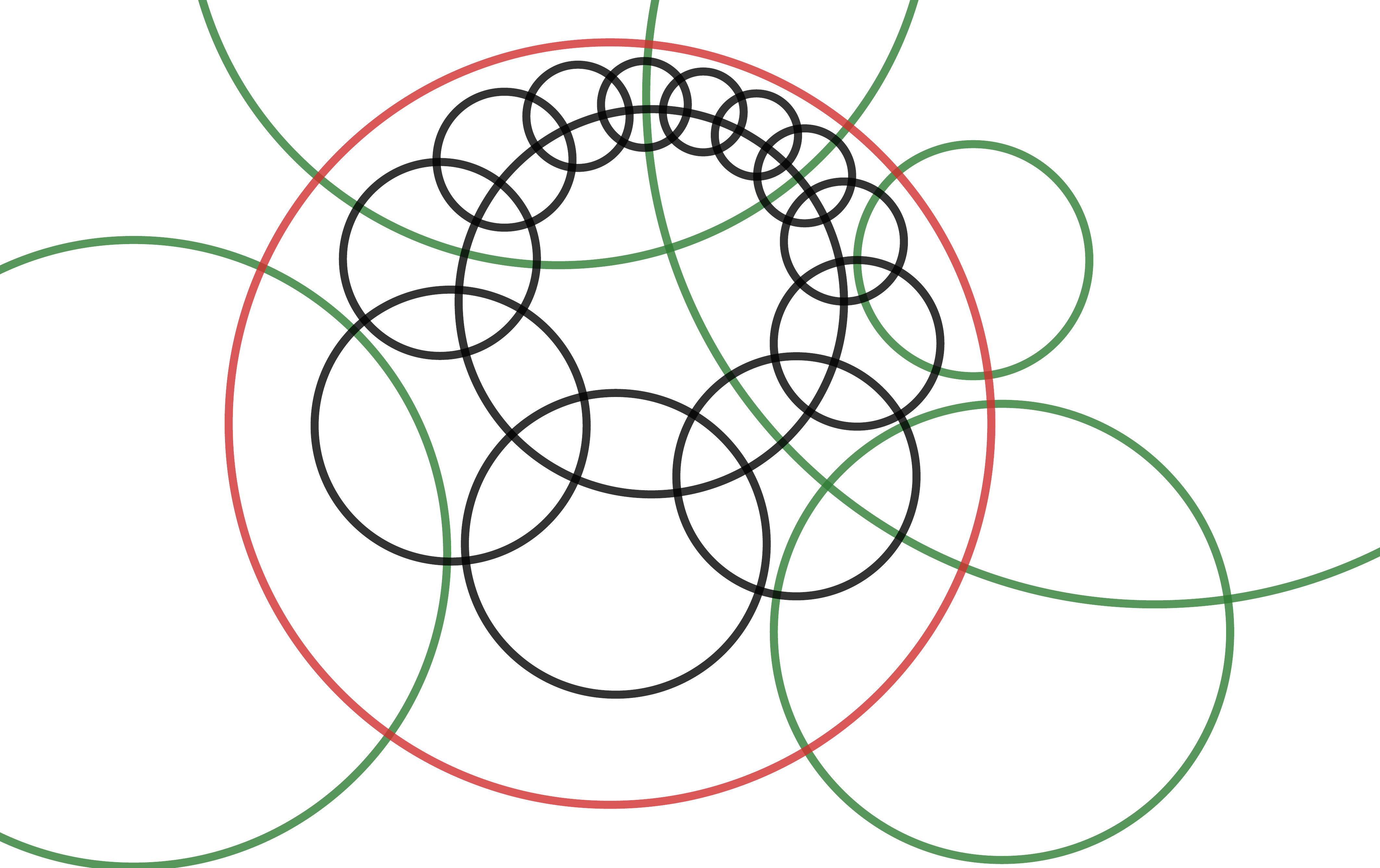}
   	  \caption{Illustration of the red, black and green circles. This arrangement has
   	  only one inner black circle.}\label{fig:8n}
\end{center}
\end{figure}

We color edges in the intersection graph according to the color of the corresponding 
circles as follows: An intersection between a black and a green circle yields a \emph{green edge} 
and an intersection between two black circles yields a \emph{black edge}.
 If there are $n$ black circles and $b$ of those are boundary black circles, 
 then we have at most $3n-b-3$ black edges as a consequence of Euler's formula and Theorem~\ref{thm:main}. 
 We will prove that
each black circle can be orthogonal to at most two green circles (\autoref{twoCircles}). 
%
However, 
 the inner black circles can only be intersected by green circles with depth at least $1$ (\autoref{inner}). We can chose
  the red circle so that there are at most seven deep green circles. The intersection graph of these
   seven circles has at most eight edges (\autoref{noC34} and \autoref{8edges}). We exploit this 
 fact to show that only eight inner circles can be orthogonal to two green circles (\autoref{7Points}).
As a final observation we show that if there are at most $11$ black circles in the red circle, there are at most $3$ inner black circles that intersect two of the green circles.
We can then combine our findings to prove that we can always find a set of $n$ black circles 
that intersects at most $(4 + 5/11) \;n$ circles.

We are now continuing with the proofs and details. We begin by stating a few properties of arrangements of orthogonal circles. The first lemma was proven by Chaplick~et~al.~\cite{cfkw19}. 

\begin{lemma}[\cite{cfkw19}]\label{K4C4}
 No orthogonal circle intersection graph contains a $K_4$ or an induced $C_4$.
\end{lemma}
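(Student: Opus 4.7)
The plan is to handle both forbidden subgraphs with the same inversive setup: pick a point $p$ on one of the circles, say $A$, that avoids all other circles in the configuration (possible since only finitely many circles are involved) and apply the inversion centered at $p$. This M\"obius transformation sends $A$ to a straight line $\ell$, preserves angles, and preserves intersection multiplicities. A circle orthogonal to $A$ therefore becomes a circle orthogonal to $\ell$, meaning its center lies on $\ell$; a circle disjoint from $A$ becomes a circle disjoint from $\ell$, hence contained in one of the two open half-planes bounded by $\ell$. In the transformed picture Observation \ref{distance} applies directly to the Euclidean distances of centers.

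For $K_4$, suppose $A, B, D, E$ are pairwise orthogonal. After the transformation, $B, D, E$ are three pairwise orthogonal circles whose centers all lie on $\ell$. Order these centers along $\ell$ as $x_1 < x_2 < x_3$ with radii $r_1, r_2, r_3$; squaring the identity $x_3 - x_1 = (x_3 - x_2) + (x_2 - x_1)$ and substituting $(x_j - x_i)^2 = r_i^2 + r_j^2$ from Observation \ref{distance} collapses to $0 = 2 r_2^2 + 2(x_3 - x_2)(x_2 - x_1)$, which is impossible since all terms are strictly positive. For an induced $C_4$ on $A - B - D - E - A$ (so $A \perp B$, $A \perp E$, $D \perp B$, $D \perp E$, while $A \cap D = \emptyset$ and $B \cap E = \emptyset$), after the transformation $B$ and $E$ have centers $(b,0), (e,0)$ on $\ell$ (taken as the $x$-axis) and $D$ lies strictly on one side of $\ell$, so its center $(d_1, d_2)$ satisfies $d_2^2 > r_D^2$. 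The orthogonality relations $D \perp B$ and $D \perp E$ together with $d_2^2 > r_D^2$ immediately yield $(d_1 - b)^2 < r_B^2$ and $(d_1 - e)^2 < r_E^2$. Subtracting the two orthogonality equations pins $d_1$ down uniquely, and substituting that value back into one of the two inequalities and simplifying produces $|r_B - r_E| < |b - e| < r_B + r_E$, which is exactly the condition that $B$ and $E$ intersect, contradicting $B \cap E = \emptyset$.

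The main obstacle is the $C_4$ step: the naive interval argument from $|d_1 - b| < r_B$ and $|d_1 - e| < r_E$ handles only the disjoint subcase $|b - e| > r_B + r_E$ of $B \cap E = \emptyset$, since when $B$ and $E$ are nested the two intervals still overlap. To cover the nested subcase one must track $d_1$ exactly; the inequality $(d_1 - b)^2 < r_B^2$ then rewrites as a quadratic in $u = (e - b)^2$ whose roots are $(r_B \pm r_E)^2$, and the resulting two-sided bound on $u$ is precisely the intersection condition for $B$ and $E$, knocking out both subcases simultaneously.
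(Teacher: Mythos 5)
Your proof is correct. Note that the paper itself gives no proof of Lemma~\ref{K4C4}: it is imported verbatim from Chaplick et al.~\cite{cfkw19}, so there is nothing in this document to compare against line by line; your argument is a self-contained verification in the same inversive spirit as that source. The $K_4$ part is clean: after inverting at a point of $A$ off the other circles, the three remaining centers are collinear and distinct, and the identity $(x_3-x_1)^2=(x_3-x_2)^2+(x_2-x_1)^2+2(x_3-x_2)(x_2-x_1)$ combined with Observation~\ref{distance} forces $2r_2^2+2(x_3-x_2)(x_2-x_1)=0$, impossible. The $C_4$ part is where a careless argument would fail, and you correctly flagged and repaired the gap: since ``$B$ and $E$ do not intersect'' includes the nested subcase $|b-e|<|r_B-r_E|$, the mere overlap of the intervals $(b-r_B,b+r_B)$ and $(e-r_E,e+r_E)$ is not a contradiction. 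Your fix --- solving $(e-b)(2d_1-b-e)=r_B^2-r_E^2$ for $d_1$ (legitimate, since $b=e$ would force $B=E$), substituting into $(d_1-b)^2<r_B^2$, and observing that the resulting quadratic $u^2-2(r_B^2+r_E^2)u+(r_B^2-r_E^2)^2<0$ in $u=(e-b)^2$ has roots exactly $(r_B\pm r_E)^2$ --- yields $|r_B-r_E|<|e-b|<r_B+r_E$, i.e.\ $B$ and $E$ intersect, contradicting the induced non-edge in both subcases at once. The only hypotheses you rely on implicitly (that non-intersecting circles in such an arrangement are disjoint as point sets, so $D$ lands strictly inside an open half-plane and $d_2^2>r_D^2$ is strict) are guaranteed because touching circles are excluded by definition in an orthogonal arrangement. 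The argument stands.
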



\begin{lemma}\label{twoCircles}
 Let $A$ and $B$ be two nested circles. There are at most two circles that intersect both $A$ and $B$  orthogonally.
\end{lemma}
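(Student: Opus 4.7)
The plan is to argue by contradiction, relying crucially on \autoref{K4C4} (no $K_4$ or induced $C_4$ in an orthogonal circle intersection graph). Suppose, towards a contradiction, that three distinct circles $D_1, D_2, D_3$ each intersect both $A$ and $B$ orthogonally. Then each $D_i$ is adjacent to both $A$ and $B$ in the intersection graph, and the four labels $A,B,D_i,D_j$ are pairwise distinct for any $i\neq j$. Meanwhile, since $A$ and $B$ are nested they do not intersect, so $\{A,B\}$ is a non-edge.

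First I would show that $D_1,D_2,D_3$ must be pairwise adjacent. Indeed, if some pair $D_i,D_j$ were non-adjacent, then the four vertices $A,D_i,B,D_j$ would form an induced $4$-cycle (with edges $AD_i$, $D_iB$, $BD_j$, $D_jA$ and non-edges $AB$, $D_iD_j$), contradicting \autoref{K4C4}.

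Once this is established, $D_1,D_2,D_3$ pairwise intersect and each of them additionally intersects $A$; hence the vertex set $\{A,D_1,D_2,D_3\}$ induces a $K_4$, again contradicting \autoref{K4C4}. This rules out the existence of a third circle and proves the lemma.

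The main conceptual obstacle is the realization that this is essentially a graph-theoretic statement rather than a computation in continuous geometry. From a metric viewpoint, the circles orthogonal to both $A$ and $B$ form a one-parameter family (by \autoref{distance} their centers lie on the radical axis of $A$ and $B$, and the radius is then determined), so \emph{a priori} there is no bound. The bound of two arises entirely from the combinatorial constraints on orthogonal circle intersection graphs imposed by \autoref{K4C4}; once one spots the induced $C_4$ and $K_4$, the proof is immediate.
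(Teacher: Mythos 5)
Your proof is correct and follows essentially the same argument as the paper: the paper likewise splits into the case where the three circles are pairwise orthogonal (yielding a $K_4$ with $A$) and the case where some pair is not (yielding an induced $C_4$ through $A$ and $B$), both contradicting \autoref{K4C4}. Your reorganization into "first force pairwise adjacency, then extract the $K_4$" is just a restatement of that same case analysis.
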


\begin{proof}
 Suppose that there are two nested circles $A$ and $B$ ($B$ lies inside $A$) that both
intersect at least three circles $D$, $E$ and $F$  orthogonally.
 Consider the intersection graph of $A$, $B$, $D$, $E$ and $F$.
If the circles $D$, $E$ and $F$ are pairwise orthogonal to each other the vertices corresponding of $A$, $D$, $E$ and $F$ form a $K_4$, a contradiction due to \autoref{K4C4}.
However, if two of the circles $D$, $E$ and $F$ are not orthogonal to each other their corresponding vertices together with $B$ and $A$ induce a $C_4$, which yields a contradiction due to \autoref{K4C4}.

%
%
Thus, at most two circles that intersect both $A$ and $B$ orthogonally.
\end{proof}

\begin{lemma}\label{intersection} 
If a circle $C$ intersects the circles $A$ and $B$ orthogonally, then one of the following holds:
%
(i) $A$ and $B$ do not intersect, or
(ii) $A$ and $B$ are orthogonal and $C$ contains precisely one of the intersection points of $A$ and $B$.

\end{lemma}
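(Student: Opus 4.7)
The plan is to apply Observation~\ref{distance} and perform a short coordinate computation. Place the plane so that $C_A=(0,0)$ and $C_B=(d,0)$ with $d=|C_AC_B|$; if $d=0$ the circles $A$ and $B$ are concentric, hence disjoint, so conclusion (i) applies and we may assume $d>0$. Writing $C_C=(x_C,y_C)$, Observation~\ref{distance} applied to $C\perp A$ and $C\perp B$ gives
\begin{gather*}
x_C^2+y_C^2 \;=\; r_A^2+r_C^2, \\
(x_C-d)^2+y_C^2 \;=\; r_B^2+r_C^2.
\end{gather*}
Subtracting pins down $x_C=(d^2+r_A^2-r_B^2)/(2d)$. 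The same computation performed on the defining equations of $A$ and $B$ shows that any common point of $A$ and $B$ has exactly this $x$-coordinate, so if $A$ and $B$ intersect, their two intersection points are $p=(x_C,y_p)$ and $q=(x_C,-y_p)$ for some $y_p\neq 0$; in particular $C_C$, $p$, and $q$ are collinear.

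The crucial identity then comes from subtracting $y_p^2=r_A^2-x_C^2$ (which holds since $p\in A$) from $y_C^2=r_A^2+r_C^2-x_C^2$, yielding $y_C^2-y_p^2=r_C^2$. Reflecting and relabeling so that $y_C>0$ and $y_p>0$, this forces $y_C>y_p>0$ and factors as
\[
  |C_Cp|\cdot|C_Cq| \;=\; (y_C-y_p)(y_C+y_p) \;=\; r_C^2.
\]
Comparing each of the two factors $y_C-y_p<y_C+y_p$ with their product $r_C^2$ gives $|C_Cp|=y_C-y_p<r_C<y_C+y_p=|C_Cq|$, so exactly one of $p,q$ lies strictly inside $C$ and the other strictly outside.

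Putting everything together: if $A$ and $B$ do not intersect, conclusion (i) holds. Otherwise they share two points, and since we work inside an arrangement of orthogonal circles, where any two intersecting circles intersect orthogonally, we have $A\perp B$, so conclusion (ii) follows from the containment statement just established. The main subtlety I expect is the collinearity of $C_C$ with the two intersection points of $A$ and $B$ (equivalently, that $C_C$ lies on the radical axis of $A$ and $B$); once that is extracted from the orthogonality equations, the remainder of the argument is a one-line algebraic comparison.
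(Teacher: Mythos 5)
Your proof is correct, but it takes a genuinely different route from the paper. The paper applies a M\"obius transformation sending $A$ to a straight line (so that the centers of the circles orthogonal to $A$ land on that line) and then argues qualitatively: if $C$ contained both points of $A\cap B$ it would contain $B$, and if it contained neither it would be nested in or disjoint from $B$, contradicting $B\perp C$ either way. You instead work directly in coordinates: the two orthogonality equations from \autoref{distance} show that $C_C$ lies on the radical axis of $A$ and $B$, hence is collinear with the two intersection points $p,q$, and the identity $|C_Cp|\cdot|C_Cq|=r_C^2$ (the power of $C_C$ with respect to $A$) forces exactly one of $p,q$ strictly inside $C$. Your version is more elementary and self-contained, avoids invoking conformality of M\"obius maps, and yields the sharper quantitative conclusion that one intersection point is strictly inside and the other strictly outside $C$; the paper's version is shorter and coordinate-free. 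One shared caveat: the clause ``$A$ and $B$ are orthogonal'' in (ii) does not follow from the stated hypothesis alone in either argument --- the paper silently uses it when placing the center of $B$ on the image of $A$, and you import it explicitly from the ambient orthogonal arrangement (which also rules out the tangential case $y_p=0$). Being explicit about that, as you are, is the cleaner choice.
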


\begin{proof}
We prove that if (i) does not hold, then (ii) holds. So assume $A$ and $B$ intersect.
We apply a Möbius transformation that maps $A$ to a straight line. Note that such a transformation
is conformal and thus maintains the angles; see \autoref{fig:lemma6}. The centers of $B$ and $C$ will then have to lie on $A$. 
Clearly, if $C$ contains both points of $A\cap B$ then it also has to contain $B$, but 
since $B$ intersects $C$, we have a contradiction.
Also, if $C$ does not contain any point of $A\cap B$, then it has to be either contained in $B$ or is to the left or right of $B$ along $A$,
but since $B$ intersects $C$, we have again a contradiction.
%
%
\end{proof}

\begin{figure}
	\centering
\includegraphics[scale=.7]{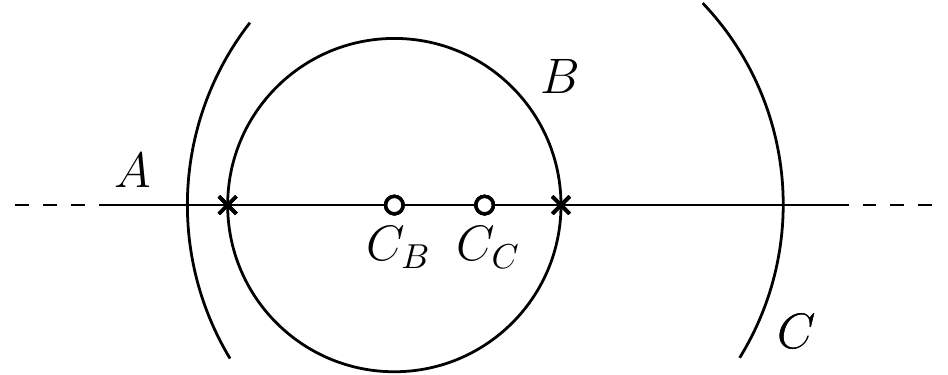}
	\caption{Situation in the proof of Lemma~\ref{intersection} when $C$ contains $A\cap B$. $C_B$ ($C_C$) is the center of $B$($C$).}
	\label{fig:lemma6}
\end{figure}

\begin{lemma}\label{justOne} 
In an arrangement of orthogonal circles let $A$ and $B$ be two circles that intersect.
All circles that are orthogonal to $A$ and $B$ that contain the same intersection point of $A$ and $B$ are nested.
\end{lemma}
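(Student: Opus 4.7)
The plan is to apply a Möbius transformation similar to the one used in the proof of Lemma~\ref{intersection}, but this time sending the \emph{other} intersection point of $A\cap B$ to infinity. Let $p,q$ be the two intersection points of $A$ and $B$, and let $\mathcal F$ be the collection of circles that are orthogonal to both $A$ and $B$ and that contain $p$ in their interior; the goal is to show $\mathcal F$ is totally nested. I would choose a Möbius transformation $\phi$ that sends $q$ to $\infty$. Because $A$ and $B$ both pass through $q$ and meet orthogonally there, and $\phi$ is conformal, their images $\phi(A)$ and $\phi(B)$ are perpendicular straight lines meeting at $\phi(p)$.

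Next I would check that no $C\in\mathcal F$ actually passes through $q$: otherwise, at $q$ the tangent to $C$ would have to be perpendicular to both the tangent to $A$ at $q$ and the tangent to $B$ at $q$, which is impossible since those two tangents are already perpendicular to each other. Consequently $\phi(C)$ is again a circle (not a line). Conformality also forces $\phi(C)$ to be orthogonal to both of the perpendicular lines $\phi(A)$ and $\phi(B)$, so its center must lie on each of them, i.e.\ exactly at $\phi(A)\cap\phi(B)=\phi(p)$. Thus every image $\phi(C)$ is a circle centered at $\phi(p)$, and such a family of concentric circles is manifestly pairwise non-crossing.

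To transfer the conclusion back, I would use that Möbius transformations preserve the relation ``two circles have empty set-theoretic intersection as curves,'' so the original circles in $\mathcal F$ are pairwise non-crossing in the plane. Two non-crossing circles in the plane are either externally disjoint, strictly nested, or coincide; but each $C\in\mathcal F$ contains $p$ in its interior, so no two members of $\mathcal F$ can be externally disjoint. Hence any two circles in $\mathcal F$ are nested, which is the claim. The only point that requires care is ruling out the degenerate case in which some $C$ is sent to a line by $\phi$, and the tangent-at-$q$ argument above handles it cleanly; the rest is a direct consequence of conformality and the characterization of orthogonality-to-a-line as ``center on the line.''
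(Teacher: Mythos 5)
Your proof is correct, but it takes a genuinely different route from the paper's. The paper argues combinatorially: if two circles $C$ and $D$ both contain the intersection point $u$ and are not nested, they must intersect each other, so $A,B,C,D$ would induce a $K_4$ in the intersection graph, contradicting Lemma~\ref{K4C4}. Your argument is instead a self-contained geometric one in the spirit of the paper's proof of Lemma~\ref{intersection}: sending the other intersection point $q$ to infinity turns $A$ and $B$ into perpendicular lines, and every circle orthogonal to both (your tangent argument at $q$ correctly rules out the degenerate case of a circle through $q$) becomes a circle centered at their crossing point $\phi(p)$; concentric circles are pairwise disjoint, disjointness pulls back under the M\"obius map, and two disjoint circles both containing $p$ in their interiors must be nested. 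The care you take in transferring only disjointness --- rather than nestedness, which M\"obius maps need not preserve --- and then re-deriving nestedness from the common interior point $p$ is exactly right. What your approach buys is independence from Lemma~\ref{K4C4} and a stronger structural picture (the circles orthogonal to both $A$ and $B$ containing $p$ form a concentric, i.e.\ coaxial, family), while the paper's proof is shorter given that the $K_4$-exclusion is already available as a black box.
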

\begin{proof}
Assume that there are two nonnested circles $C$ and $D$ that both contain the same intersection point $u$ of $A$ and $B$. Since $C$ and $D$ contain $u$ but are not nested, they must intersect each other. Both also intersect $A$ and $B$. This means the intersection graph of the four circles is a $K_4$. This contradicts \autoref{K4C4}.
\end{proof}

%
%
%
%
 


The following lemma is again taken from Chaplick~et~al.~\cite[Lemma~5]{cfkw19}. 
The \enquote{Moreover}-part is not explicitly  written down, but 
it is apparent from the construction given in its proof. 

\begin{lemma}[\cite{cfkw19}] \label{chaplick7erStrong} 
 Every arrangement of orthogonal circles has a circle that is orthogonal to at most seven other circles.
 Moreover, this circle is a shallow circle. 
\end{lemma}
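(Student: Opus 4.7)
The plan is to pick $C$ to be a circle of smallest radius in the arrangement. This immediately secures the moreover-part: any circle properly contained in $C$ would be strictly smaller, contradicting minimality, so $C$ is shallow. The remaining task is to show that $C$ has at most seven orthogonal neighbors. The core quantitative input is an angular-span bound at $C_C$: for every orthogonal neighbor $D$, \autoref{distance} gives $|C_CC_D| = \sqrt{r_C^2+r_D^2}$, and since $r_D \geq r_C$, the circle $D$ subtends an angular interval at $C_C$ of length $2\arctan(r_D/r_C) \geq 90^\circ$. Moreover, if two neighbors $D,E$ are themselves orthogonal, applying the law of cosines to the triangle $C_CC_DC_E$ together with $|C_DC_E|^2 = r_D^2+r_E^2$ yields
\[
\cos\angle C_DC_CC_E \;=\; \frac{r_C^2}{|C_CC_D|\cdot|C_CC_E|} \;\leq\; \tfrac12,
\]
so two orthogonal neighbors have center-angles, viewed from $C_C$, separated by at least $60^\circ$.

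From these geometric facts the plan is to combine three combinatorial ingredients. First, by \autoref{K4C4}, $N(C)$ is triangle-free, so no three neighbors of $C$ are pairwise orthogonal; together with the $60^\circ$-gap this already caps the mutually-intersecting neighbors around $C$ at six (attained by a regular $C_6$ of equal radii at $60^\circ$ spacings). Second, \autoref{twoCircles} caps the number of neighbors that can sit along a single nested chain. Third, the $90^\circ$ angular-span estimate, together with the no-induced-$C_4$ half of \autoref{K4C4}, constrains how many neighbors can share overlapping angular positions at $C_C$ without forming a forbidden induced $C_4$ via some pair that is disjoint but angularly nested.

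The main obstacle will be turning all of this into the sharp bound of $7$ instead of the easier $8$. A naive argument using only the $90^\circ$ angular span and assuming every direction from $C_C$ is covered by at most two intervals already gives $k \leq 8$; shaving off the last unit requires using the $60^\circ$-gap for orthogonal pairs, the triangle-freeness of $N(C)$, and the nesting cap from \autoref{twoCircles} simultaneously, through a case analysis on the cyclic order of neighbor-center angles and their radii relative to $r_C$. Following the explicit construction of Chaplick et al.~\cite{cfkw19}, I would expect this final step to reduce to a finite check on small configurations of center angles and radii, rather than admitting a clean averaging or discharging argument.
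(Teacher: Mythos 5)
This lemma is not proved in the paper at all: it is imported verbatim from Chaplick et al.~\cite{cfkw19} (their Lemma~5), and the authors only add the remark that the circle produced by that proof happens to be shallow. So the comparison target is the cited proof, and your reconstruction both departs from it and, more importantly, does not work. The fatal problem is the very first step. A circle of smallest radius can be orthogonal to arbitrarily many other circles. Let $C$ be the unit circle centred at the origin and, for radii $1<r_1<r_2<\cdots<r_k$, let $D_j$ be the circle of radius $r_j$ centred at $(\sqrt{1+r_j^2},\,0)$. By \autoref{distance} every $D_j$ meets $C$ orthogonally. The diametral interval of $D_j$ on the $x$-axis is $[\sqrt{1+r_j^2}-r_j,\ \sqrt{1+r_j^2}+r_j]$; since $x\mapsto\sqrt{1+x^2}-x$ is strictly decreasing and $x\mapsto\sqrt{1+x^2}+x$ is strictly increasing, these intervals are strictly nested, so $D_1\subsetneq D_2\subsetneq\cdots\subsetneq D_k$ and no two of the $D_j$ intersect. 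This is a legitimate arrangement of orthogonal circles in which the unique smallest circle has $k$ orthogonal neighbours. None of your combinatorial ingredients excludes it: the $D_j$ are pairwise non-adjacent, so triangle-freeness and the induced-$C_4$ prohibition are vacuous, and \autoref{twoCircles} bounds the number of circles orthogonal to \emph{both} members of one fixed nested pair (here only $C$ is, for every pair $(D_i,D_j)$) — it does not cap the number of pairwise nested neighbours of a single circle, which is how you invoke it. Your quantitative estimates (the $\ge 90^\circ$ span per neighbour and the $\ge 60^\circ$ separation for orthogonal neighbours) are correct computations, but they are simply not violated here: all the $D_j$ stack their nearly $180^\circ$ angular intervals on top of one another in the same direction, so the hypothesis "every direction is covered by at most two intervals", which you assume rather than prove, is false for this choice of $C$.

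Consequently the entire difficulty of the lemma sits in the selection of $C$, which your plan treats as the easy part: one must choose a circle whose orthogonal neighbours cannot form such a nested fan, and minimising the radius does not achieve this (it only yields the "moreover" part). Once nested neighbours are excluded, the intended argument does go roughly as you sketch — disjoint non-nested neighbours cut out disjoint arcs of $C$, overlapping arcs force orthogonal crossings, and the absence of $K_4$ (\autoref{K4C4}) limits every point of $C$ to be covered by at most two arcs of length at least $\pi/2$, giving at most eight neighbours, with a separate refinement needed for seven. Your proposal also leaves that final $8\to 7$ step entirely open ("I would expect this to reduce to a finite check"), so even setting aside the broken selection, what you have is a plan with the two hardest steps missing rather than a proof.
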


We can deduce a similar lemma for deep circles.

\begin{lemma} \label{7er}
 Every arrangement of orthogonal circles with nested circles has a circle $C$ with depth $t(C)=1$ that is orthogonal to at most seven other circles with depth at least $1$.
\end{lemma}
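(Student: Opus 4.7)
My plan is to apply Lemma~\ref{chaplick7erStrong} to the subarrangement of $\mathcal{A}$ consisting of the deep circles. Let $\mathcal{A}' \subseteq \mathcal{A}$ denote the collection of all circles of $\mathcal{A}$ of depth at least~$1$. Since $\mathcal{A}$ contains a pair of nested circles, the outer circle of any such pair has depth at least $1$ and thus lies in $\mathcal{A}'$, so $\mathcal{A}'$ is nonempty. Being a subcollection of $\mathcal{A}$, it is itself an arrangement of orthogonal circles, so Lemma~\ref{chaplick7erStrong} can be applied to it.

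Applying that lemma to $\mathcal{A}'$ yields a circle $C \in \mathcal{A}'$ that is shallow with respect to $\mathcal{A}'$ and orthogonal to at most seven other circles of $\mathcal{A}'$. The next step is to show that $C$ has depth exactly $1$ with respect to the full arrangement $\mathcal{A}$. On the one hand, $C \in \mathcal{A}'$ already gives $t_{\mathcal{A}}(C) \ge 1$. On the other hand, suppose for contradiction that $\mathcal{A}$ contains a nested chain $D_1 \subsetneq D_2 \subsetneq C$. Then $D_2$ properly contains $D_1$, so $t_{\mathcal{A}}(D_2) \ge 1$ and therefore $D_2 \in \mathcal{A}'$; but $D_2 \subsetneq C$ then contradicts the choice of $C$ as a shallow circle of $\mathcal{A}'$. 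Hence $t_{\mathcal{A}}(C) = 1$.

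Finally, the (at most seven) circles of $\mathcal{A}'$ orthogonal to $C$ are by construction circles of depth at least $1$ in $\mathcal{A}$, while every circle of $\mathcal{A} \setminus \mathcal{A}'$ is shallow and therefore does not contribute to the count of orthogonal circles of depth at least $1$. Consequently $C$ is orthogonal to at most seven other circles of depth at least~$1$ in $\mathcal{A}$, as required.

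I do not anticipate a genuine obstacle; the argument is essentially a one-line reduction. The only conceptual point that must be handled carefully is the translation between ``shallow in $\mathcal{A}'$'' and ``depth exactly $1$ in $\mathcal{A}$'', which is precisely the chain argument in the second paragraph and relies on the observation that being a proper superset of a deep circle is itself enough to be deep.
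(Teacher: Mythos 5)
Your proof is correct and follows the same route as the paper: delete the shallow circles, apply Lemma~\ref{chaplick7erStrong} to the remaining subarrangement of deep circles, and translate ``shallow there'' into ``depth exactly $1$ in the original arrangement.'' You simply spell out the chain argument for that last translation, which the paper states in one sentence.
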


\begin{proof}
Let $\mathcal A$ be an arrangement of orthogonal circles. By deleting all shallow circles we obtain the arrangement $\mathcal A'$. According to \autoref{chaplick7erStrong} we can find a shallow circle $C$ in $\mathcal A'$ that is orthogonal to at most seven other circles. 
Since $C$ is shallow in $\mathcal A'$ it has depth $t(C)=1$ in the arrangement $\mathcal A$. 
\end{proof}

In the following we select any circle that meets the requirements of Lemma~\ref{7er} and refer to it as the red circle. We remind the reader that we call the circles contained in the red circle
the black circles.
 
 \begin{lemma}\label{4mi}
The set of black circles $S_B$ inside a red circle $C$ corresponds to a vertex set $V_B$ in the intersection graph incident to 
no more than $4n_B+i-3$ edges,
for $n_B=|S_B|$ and $i$ being the number of inner black circles in $S_B$, that each are orthogonal to two circles not in $S_B$.
\end{lemma}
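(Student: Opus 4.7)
The plan is to split the edges incident to $V_B$ according to the colouring already introduced. A \emph{black edge} has both endpoints in $S_B$ and a \emph{green edge} has exactly one endpoint in $S_B$ (the other being a green circle), so the total that needs to be bounded is $e_{\text{black}} + e_{\text{green}}$. For the black edges I note that, since the red circle $C$ was chosen with depth $t(C)=1$, no two circles in $S_B$ are nested, so \autoref{thm:main} applies to the sub-arrangement $S_B$: the embedded intersection graph on $V_B$ is a planar straight-line drawing. By definition of boundary black circles exactly $b$ of the $n_B$ vertices lie on the outer face, so its length is at least $b$ while every inner face has length at least~$3$. A standard face-degree count together with Euler's formula (augmenting to a connected graph if necessary) then yields $e_{\text{black}} \le 3n_B - b - 3$.

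For the green edges, first observe that any circle orthogonal to a black circle $B$ that is not itself in $S_B$ must be green: since $B$ is strictly contained in $C$ it cannot meet a circle that contains $C$, nor one that is disjoint from $C$; the only remaining possibility is a circle that intersects $C$, i.e.\ a green one. Because $B$ and $C$ are nested, \autoref{twoCircles} ensures that $B$ has at most two green neighbours. By the very definition of $i$, exactly $i$ of the $n_B - b$ inner black circles realise this maximum of $2$, while the other $n_B - b - i$ inner ones have at most $1$ green neighbour, and the $b$ boundary circles contribute at most $2b$. Summing and adding the black bound gives
\[
e_{\text{black}} + e_{\text{green}} \;\le\; (3n_B - b - 3) + \bigl(2b + 2i + (n_B - b - i)\bigr) \;=\; 4n_B + i - 3,
\]
and the auxiliary parameter $b$ conveniently cancels.

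The only slightly delicate step is the planar bound $e_{\text{black}} \le 3n_B - b - 3$: one has to confirm that $b$ really is the number of vertices on the outer face of the straight-line drawing (this follows from the definition of a boundary black circle together with the noncrossing property from \autoref{thm:main}), handle the case where the sub-arrangement's intersection graph is disconnected by treating components separately or by adding dummy edges before invoking Euler, and check the few degenerate configurations with very small $n_B$ directly. Everything else is bookkeeping built on top of \autoref{thm:main} and \autoref{twoCircles}.
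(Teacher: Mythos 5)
Your proposal is correct and follows essentially the same route as the paper: the same split into black and green edges, the same planarity/Euler bound of $3n_B-b-3$ on black edges via \autoref{thm:main}, and the same count of at most $n_B+b+i$ green edges via \autoref{twoCircles}, with $b$ cancelling in the sum. The extra care you take about connectivity and the meaning of $b$ is reasonable but does not change the argument.
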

\begin{proof}

Let $C$ be the red circle. 
%
%
We count the edges incident to $V_B$. Edges with two endpoints in $V_B$ are black edges, edges with
one endpoint in $V_B$ are green edges. 
We denote the number of boundary black circles by $b$.
According to \autoref{thm:main} the intersection graph of 
the arrangement restricted to the $S_B$ is planar. Moreover this planar graph has $b$ vertices
on its outer face. Thus, by Euler's formula we have at most than $3n_B-b-3$ black edges. 

We now count the green edges. Every circle $D \notin S_B$ that intersects a circle in $S_B$ has to intersect $C$ as well. 
According to \autoref{twoCircles}, each of the $n_B$ black circles 
is orthogonal to at most two green circles. 
By our assumption $n_B-b-i$ black inner circles intersect at most one green circle. Thus, we have 
at most $2n_B-(n_B-b-i)=n_B+b+i$ green edges. Adding the $3n_B-b-3$ black edges yields the upper bound of 
$4n_B+i-3$ as stated in the lemma.
%
%
%
\end{proof}
 

\begin{lemma} \label{inner} 
Every green circle intersecting an inner black circle is a deep circle.
\end{lemma}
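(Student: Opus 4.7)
The plan is to argue by contradiction: suppose the green circle $G$ intersects the inner black circle $B'$ but is shallow (contains no other circle properly).

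First I would verify that, under the shallowness assumption, the sub-arrangement consisting of $G$ together with all black circles is nonnested. The black circles are pairwise nonnested since they all lie properly inside the depth-$1$ red circle $C$. For $G$ and any black circle $B_i$, both nested configurations can be excluded: if $G$ were contained in $B_i$, then $C_G$ would lie inside $C$ (since $B_i$ lies inside $C$), contradicting the fact that $C_G$ lies outside $C$ by Lemma~\ref{lem1} applied to the orthogonal nonnested pair $G,C$; and $B_i$ contained in $G$ would directly contradict the shallowness of $G$. Hence Theorem~\ref{thm:main} applies: the embedded intersection graph of $G$ together with the black circles admits a noncrossing straight-line drawing.

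Next I would carry out the topological punchline. Inside this noncrossing drawing, consider the subdrawing induced by only the black-circle centers and the edges between intersecting pairs of black circles---this is exactly the drawing used in the definition of inner versus boundary black circles. By assumption $C_{B'}$ is an interior vertex of this black subdrawing, while $C_G$ lies in its outer face (since $C_G$ is outside $C$ and the entire black subdrawing is contained in the interior of $C$). The key fact is that in any planar straight-line drawing, a segment from an interior vertex to a point in the outer face must cross at least one edge, because faces are separated by edges. Applied to the segment $C_G C_{B'}$---itself an edge of the full drawing since $G\cap B'\neq\emptyset$---this forces a crossing with some black edge $C_{B_i}C_{B_j}$, contradicting the noncrossing property established above.

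The main subtlety I anticipate is ruling out the degenerate case where the segment $C_GC_{B'}$ passes through some other black-circle center $C_{B_k}$ rather than through the relative interior of an edge. This is handled by Lemma~\ref{lem3} applied in the nonnested sub-arrangement to the intersecting pair $G,B'$: no third circle can meet the segment $C_GC_{B'}$, so in particular no other black circle (hence no other black center) can lie on it. Once this degeneracy is dispatched, the contradiction above forces $G$ to be deep, as claimed.
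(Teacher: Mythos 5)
Your proposal is correct and takes essentially the same route as the paper: both form the nonnested sub-arrangement of the black circles together with the putative shallow green circle, invoke Theorem~\ref{thm:main} to get a noncrossing embedded intersection graph, and derive a contradiction because the edge from the green center (in the outer face of the black subdrawing) to the inner black center must cross a boundary black edge. Your extra step dispatching the degenerate case where the segment passes through another black center (via Lemma~\ref{lem3}) is a small refinement the paper leaves implicit.
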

\begin{proof}
Let $D$ be the red circle and $S_D$ be the set of black circles.
Suppose for a contradiction that there is a shallow green circle $E$ with center $C_E$ that intersects an inner circle $F \in S_D$ with center $C_F$. 
Note that in this case $E$ also has to intersect $D$. 
By \autoref{lem1} $C_E$, is therefore outside of $D$; see \autoref{fig:inner}.


Let $\mathcal A$ be the arrangement consisting of the circles in $S_D$ and $E$. All circles in $S_D$ and the circle $E$ have depth $0$ so the arrangement $\mathcal A$ is nonnested. According to \autoref{thm:main} the embedded intersection graph  $G(\mathcal A)$ is noncrossing.

Let $\mathcal A'$ be the arrangement consisting only of the circles in $S_D$. Again, all the circles are shallow, thus the arrangement is nonnested and the intersection graph $G(\mathcal A')$ noncrossing.

Since $C_E$ is outside $D$ it lies in the outer face of $G(\mathcal A')$. On the other
hand~$F$ is an inner circle, so its corresponding vertex 
is not on the boundary of $G(\mathcal A')$. The straight-line edge between $C_E$ and $C_F$ must intersect an edge on the boundary of the outer face of $G(\mathcal A')$. This yields a crossing and thus a contradiction. 

\begin{figure}[htpb]   
\begin{center}
    	  \includegraphics[scale=0.75]{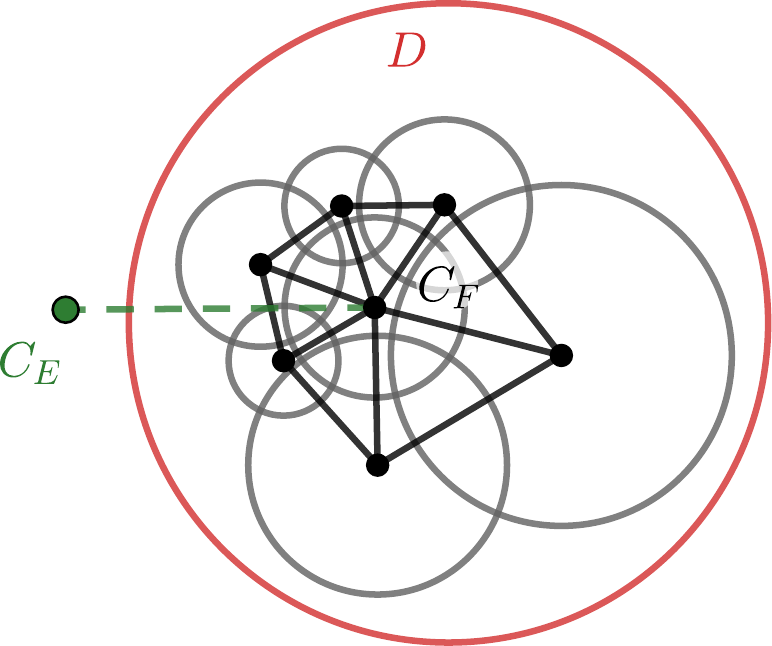}
   	  \caption{Illustration of the arrangement in \autoref{inner}.}\label{fig:inner}
\end{center}
\end{figure}

\end{proof}
 
By \autoref{chaplick7erStrong} a red circle $C$ intersects at most $7$ deep 
circles.
We now take a look at the possible intersections of the seven deep circles. We start with the
following observation.

\begin{observation}\label{noC34}
Let $I_C$ be the set of deep circles that intersect a red circle. The intersection graph of $I_C$ has 
\begin{itemize}
 \item no induced $C_4$, according to \autoref{K4C4} and
\item  no induced $C_3$, since every circle in $I_C$ is orthogonal to the red circle and according to \autoref{K4C4} there is no $K_4$ in the intersection graph of the arrangement consisting of $I_C$ and $C$.
\end{itemize} 
\end{observation}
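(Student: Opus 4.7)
The plan is to verify each of the two bullet points separately, with both reductions following directly from \autoref{K4C4}.

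For the absence of an induced $C_4$ in the intersection graph of $I_C$: observe that this graph is an induced subgraph of the intersection graph of the whole orthogonal circle arrangement. Since \autoref{K4C4} forbids an induced $C_4$ anywhere in the full orthogonal circle intersection graph, and the property ``no four vertices induce a $C_4$'' is preserved under taking induced subgraphs (the four witness vertices induce the same $C_4$ in the full graph), the conclusion is immediate. So this bullet requires no new geometric input beyond \autoref{K4C4}.

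For the absence of an induced $C_3$ (a triangle): I would argue by contradiction. Suppose three deep circles $D_1, D_2, D_3 \in I_C$ pairwise intersect, forming a triangle in the intersection graph. By the definition of $I_C$, each $D_i$ is orthogonal to the red circle $C$. Then the four circles $\{C, D_1, D_2, D_3\}$ are pairwise orthogonal, which means their four corresponding vertices in the intersection graph of the full arrangement form a $K_4$. This contradicts \autoref{K4C4}.

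The main ``obstacle'' is really just making sure the induced-subgraph aspect is spelled out correctly: a triangle among $D_1, D_2, D_3$ in the induced subgraph on $I_C$ is also a triangle in the full intersection graph, and adjoining $C$ promotes it to a genuine $K_4$; similarly, an induced $C_4$ on $I_C$ is an induced $C_4$ in the full graph. Since both reductions are purely graph-theoretic once \autoref{K4C4} is in hand, there is no additional geometric work to do. The observation is thus essentially a direct corollary of \autoref{K4C4} combined with the defining property of $I_C$.
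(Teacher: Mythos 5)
Your proposal is correct and follows exactly the paper's reasoning: the induced $C_4$ is ruled out because the intersection graph of $I_C$ is an induced subgraph of the full intersection graph, to which Lemma~\ref{K4C4} applies, and a triangle in $I_C$ together with the red circle $C$ would form a forbidden $K_4$. No further comment is needed.
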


By a case distinction we can limit the graphs that fulfil the constraints listed in \autoref{noC34}.
The proof is given in the appendix.

\begin{restatable}[]{lemma}{eightedges}\label{8edges}
 Every graph $G$ with at most seven vertices without an induced $C_3$ or $C_4$ has at most $8$ edges. 
\end{restatable}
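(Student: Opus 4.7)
The plan is to translate the two forbidden induced subgraphs into a single clean structural condition and then argue by maximum degree. My first step is to note that the hypotheses imply $G$ has girth at least $5$: any two vertices $a,b$ can share at most one common neighbor, because if $c$ and $d$ were two of them, then $a,b$ would be non-adjacent (otherwise $acb$ is a triangle) and $c,d$ would be non-adjacent (likewise), so $\{a,c,b,d\}$ would induce a $C_4$, a contradiction.

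With this in hand, I would split on the maximum degree $\Delta$ of $G$. The case $\Delta \le 2$ is immediate, since then $G$ is a disjoint union of paths and cycles of length $\ge 5$, so $|E(G)| \le |V(G)| \le 7$. For $\Delta \ge 3$, pick $v$ of degree $\Delta$ and partition $V(G) \setminus \{v\}$ into $U := N(v)$ and $X := V(G) \setminus (\{v\} \cup U)$, with $|U| = \Delta$ and $|X| \le 6 - \Delta$. I would then bound each class of edges separately: the $v$-$U$ edges contribute $\Delta$; the $U$-$U$ edges and the $v$-$X$ edges contribute zero by triangle-freeness and by the definition of $X$; each $x \in X$ has at most one neighbor in $U$ (otherwise two such neighbors $u_1, u_2$ together with $v$ and $x$ would induce a $C_4$, since $vx$ and $u_1 u_2$ are non-edges), giving $|E(U,X)| \le |X|$; and Mantel's theorem applied to the triangle-free $G[X]$ yields $|E(X,X)| \le \lfloor |X|^2/4 \rfloor$.

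Summing the contributions, $|E(G)| \le \Delta + (6-\Delta) + \lfloor (6-\Delta)^2/4 \rfloor$, which I would evaluate for $\Delta \in \{3,4,5,6\}$ to obtain $8, 7, 6, 6$ respectively. Thus $|E(G)| \le 8$ in every case. The main obstacle, if any, is spotting the girth-$5$ reformulation and recognizing that the induced-$C_4$-free hypothesis then forces the key \emph{at most one neighbor in $U$} bound; once this is in place, the remaining steps are routine counting plus Mantel's theorem applied to a graph on at most three vertices.
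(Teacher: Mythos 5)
Your proof is correct, but it takes a genuinely different route from the paper's. The paper first proves a structural classification (its Lemma~\ref{7erGraph}): any graph on at most seven vertices with no induced $C_3$ or $C_4$ either has a vertex of degree at most one, or is a $C_5$, $C_6$, or $C_7$, or consists of two $C_5$'s glued along a path of two edges; the edge bound then follows from a cycle-rank argument (the graph becomes a forest after deleting at most two edges, so $m\le (n-1)+2\le 8$). You instead observe that the hypotheses force any two vertices to have at most one common neighbor, then split on the maximum degree $\Delta$: for $\Delta\le 2$ the bound is trivial, and for $\Delta\ge 3$ you partition the vertices into $\{v\}\cup N(v)\cup X$ and count edges class by class, using triangle-freeness for $N(v)$, the common-neighbor bound for the $N(v)$--$X$ edges, and Mantel's theorem (really just triangle-freeness on at most three vertices) inside $X$, arriving at $\Delta+(6-\Delta)+\lfloor(6-\Delta)^2/4\rfloor\le 8$. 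Each step checks out, including the extremal case $\Delta=3$, which matches the paper's extremal example of two glued $C_5$'s. The trade-off: the paper's classification yields the exact list of extremal configurations (though it never uses that extra information elsewhere), while your counting argument is shorter, avoids the shortest-cycle case analysis, and generalizes more readily to other vertex counts.
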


We can now bound the number of intersection points of the circles in $I_C$.

\begin{lemma}\label{7Points}
 Let $C$ be the red circle and let $I_C$ be the set of deep circles intersecting $C$. The arrangement of circles in $I_C$ has at most sixteen intersection points of which eight are inside of $C$.
\end{lemma}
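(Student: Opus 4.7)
The plan is to combine the edge-bound on the intersection graph of $I_C$ with the geometric statement in Lemma~\ref{intersection} to split the intersection points into those inside and outside $C$.

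First I would observe that by the choice of the red circle $C$ (via Lemma~\ref{7er}), $|I_C|\le 7$. By Observation~\ref{noC34}, the intersection graph $G(I_C)$ contains no induced $C_3$ and no induced $C_4$, so Lemma~\ref{8edges} gives that $G(I_C)$ has at most $8$ edges. In other words, at most eight pairs of circles in $I_C$ actually intersect.

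Next I would use that any intersecting pair in an arrangement of orthogonal circles meets orthogonally, hence in exactly two points (orthogonal circles cannot be tangent). Therefore the number of intersection points contributed by the arrangement of $I_C$ is at most $2\cdot 8=16$, which gives the first half of the claim.

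For the second half, I would invoke Lemma~\ref{intersection} applied to the red circle $C$ and any intersecting pair $A,B\in I_C$: since both $A$ and $B$ are orthogonal to $C$ and they intersect, case (ii) of the lemma applies, so $C$ contains precisely one of the two intersection points of $A$ and $B$. Summing this over the at most $8$ intersecting pairs yields at most $8$ intersection points inside $C$ (and correspondingly at most $8$ outside). The main thing to be careful about is justifying that Lemma~\ref{intersection} really applies to every such pair, which it does because orthogonality of any two intersecting circles is the defining property of the arrangement; after that observation the result follows immediately.
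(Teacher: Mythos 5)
Your proof is correct and follows essentially the same route as the paper: bound the number of intersecting pairs by $8$ via Observation~\ref{noC34} and Lemma~\ref{8edges}, then apply Lemma~\ref{intersection} to conclude that exactly one of each pair's two intersection points lies inside $C$. The extra justifications you supply (orthogonal circles meet in exactly two points, and why case (ii) of Lemma~\ref{intersection} applies) are accurate and only make the argument more explicit.
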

\begin{proof}
According to \autoref{8edges} the intersection graph of $I_C$ has at most eight edges. Hence, 
there are eight pairs of intersection points in the arrangement consisting of the circles in $I_C$. 
Due to \autoref{intersection} for every pair exactly one intersection point is inside of $C$. Thus, at most eight intersection points of circles in $I_C$ are inside the circle $C$.
\end{proof}

\begin{lemma}\label{4m}
 In the intersection graph of every arrangement of orthogonal circles we can find a 
 nonempty subset $V_C$ that is incident to at most $4n+5$ edges, where $n=|V_C|$.
\end{lemma}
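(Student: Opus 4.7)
The plan is to split the argument according to whether the arrangement contains any nested pair of circles. If it does not, then taking $V_C$ to be the entire vertex set works: \autoref{cor:1} (together with a trivial check for small $n$) yields at most $3n-8 \leq 4n+5$ edges, and all edges are trivially incident to $V_C$.

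If the arrangement does contain nested circles, I would use \autoref{7er} to choose a red circle $C$ of depth $t(C)=1$ that is orthogonal to at most seven deep circles, and take $V_C$ to be the set of black circles (those properly contained in $C$); this set is nonempty since $t(C)\geq 1$. \autoref{4mi} then bounds the edges incident to $V_C$ by $4n_B+i-3$, where $n_B=|V_C|$ and $i$ counts the inner black circles orthogonal to two green circles. It therefore suffices to prove $i\leq 8$, which gives the bound $4n_B+5$ as required.

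To bound $i$, I would fix an inner black circle $F$ counted by $i$ and let $G_1,G_2$ be its two green neighbors; both are deep by \autoref{inner}. The greens $G_1,G_2$ must intersect, for otherwise $C,G_1,F,G_2$ would form an induced $C_4$ (the chord $C$--$F$ is missing because $F$ is nested in $C$, and $G_1$--$G_2$ is missing by assumption), contradicting \autoref{K4C4}. Since any two intersecting circles in the arrangement are orthogonal, \autoref{intersection} applied with $F$ as the third circle forces $F$ to contain exactly one point of $G_1\cap G_2$, and this point lies inside $C$ because $F\subset C$. The map $F\mapsto\{G_1(F),G_2(F)\}$ is then injective: if two distinct inner blacks shared the same green pair, both would contain the unique intersection point of that pair lying inside $C$ (by \autoref{intersection} applied with $C$ as the third circle), so \autoref{justOne} would force them to be nested, contradicting that the black circles are pairwise non-nested (a consequence of $t(C)=1$). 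The image of this map is contained in the edge set of the intersection graph on the deep circles in $I_C$, which has at most eight edges by \autoref{8edges}, so $i\leq 8$. The main obstacle I anticipate is assembling this chain cleanly: one has to rule out the non-intersecting alternative of \autoref{intersection} via the $C_4$ obstruction before the injection into the orthogonal pairs of deep green circles is even well defined, and one then has to identify ``the'' intersection point inside $C$ to bring \autoref{justOne} to bear.
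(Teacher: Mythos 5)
Your proof is correct and follows essentially the same route as the paper's: choose the red circle via \autoref{7er}, apply \autoref{4mi}, and bound the number $i$ of doubly-green-intersected inner black circles by $8$ using the $C_4$ obstruction, \autoref{intersection}, \autoref{justOne}, and the eight-edge bound on the deep green circles. The only (minor, and in fact welcome) differences are that you explicitly dispose of the case with no nested circles --- which the paper's appeal to \autoref{7er} tacitly presupposes --- and that you inject inner black circles into pairs of intersecting green circles bounded by \autoref{8edges}, whereas the paper injects them into intersection points inside $C$ bounded by \autoref{7Points}; since each intersecting green pair has exactly one intersection point inside $C$, these are the same count.
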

 \begin{proof}
 Let $\mathcal A$ be an arrangement of orthogonal circles. According to \autoref{7er} we can a find a red  circle $C$ with depth $t(C)=1$ that is orthogonal to at most seven deep circles. 
  We denote the black circles by $S_C$ and set $n=|S_C|$. Further let $V_C$ denote
  the vertex set corresponding to $S_C$.
 
 We now prove that there are at most $8$ inner black circles 
 in $S_C$ that are orthogonal to two circles not in $S_C$.
 According to \autoref{inner} the inner black circles can only be intersected 
 by deep green circles.  
 If a black circle intersects two green circles, then the green circles have to intersect, otherwise the intersection graph of the black, the two green and the red circle would induce a $C_4$.
According to \autoref{intersection} a black circle that intersects two green circles contains their intersection point. \autoref{justOne} states that all circles containing the same intersection point must be nested. Since the black circles are not nested, only one black circle contains a given intersection point. By 
 \autoref{7Points} the seven deep green circles have at most eight intersection points inside $C$.  Thus, at most eight inner black circles are orthogonal to two deep green circles. 
 We now apply \autoref{4mi} with $i_2\le 8$ to obtain that $V_C$ is incident to at most $4n -3 +i_2= 4n+5$ edges.
 \end{proof}

Our goal is to apply the last lemma for bounding the density of the intersection graph. If we can 
repeatedly take out vertex sets of size $k$ with $c k$ incident edges 
(for a constant $c$), then the density
of the graph is no more than $cn$, for $n$ being the number of vertices. Unfortunately, because of the additive constant
\autoref{4m} is too weak if the subsets are small. Hence, we analyse small
sets separately to get a better bound. 
The analysis including the proofs 
can be found in the appendix. It culminates in the following statement.

\begin{restatable}[]{lemma}{four}\label{lem:4+m}
 In the intersection graph of an arrangement of orthogonal circles we can find a subset $V_C$ of $n$ vertices that has at most $\left(4+\frac{5}{11}\right) n$ edges. 
\end{restatable}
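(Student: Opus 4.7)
The plan is to combine Lemma~\ref{4m} with a sharpened count in the regime where the red circle has few black circles. Lemma~\ref{4m} already yields a subset $V_C$ of some size $n$ incident to at most $4n+5$ edges. A routine algebraic check shows $4n+5\leq(4+5/11)n$ exactly when $n\geq 11$ (with equality at $n=11$, where both sides equal $49$). Hence if the red circle selected by Lemma~\ref{7er} contains at least $11$ black circles, the bound follows immediately from Lemma~\ref{4m}. The task reduces to the case in which the red circle contains at most $10$ black circles.

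For that small regime, I would return to the tighter inequality of Lemma~\ref{4mi}, which gives at most $4n+i-3$ edges, where $i$ counts the inner black circles orthogonal to two green circles. The plan is to establish the following sharpening: \emph{whenever the red circle contains at most $11$ black circles, at most $3$ inner black circles are orthogonal to two green circles.} Granting this, we obtain $i\leq 3$ for $n\leq 10$, so the edge count drops to at most $4n\leq(4+5/11)n$. Patching this together with the large-$n$ case handled by Lemma~\ref{4m} (and observing the two estimates coincide at $n=11$) yields the claim.

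Proving the quoted sharpening is the principal obstacle. The ingredients are already in place: by Lemma~\ref{inner}, any green circle meeting an inner black circle is deep, and the red circle meets at most $7$ deep circles; by Observation~\ref{noC34} together with Lemma~\ref{8edges}, the intersection graph of these $7$ deep green circles has at most $8$ edges, so by Lemma~\ref{7Points} they produce at most $8$ intersection points inside the red circle; and by Lemmas~\ref{intersection} and~\ref{justOne}, each inner black circle counted by $i$ encloses a distinct one of these intersection points. The heart of the argument is a geometric/combinatorial packing claim: if four or more inner (nonnested) black circles each enclose such a green intersection point, then the total number of black circles must exceed~$11$. The plan is to leverage the planarity of the embedded black intersection graph (Theorem~\ref{thm:main}), the fact that an inner vertex in a planar straight-line drawing has degree at least three and is not on the outer face, and the restricted set of positions an intersection point of two deep green circles may occupy relative to the red circle's boundary.

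I expect the bulk of the technical work to lie in the final step: enumerating how the (at most $8$) green intersection points can be distributed inside $C$ given that the green intersection graph is $\{C_3,C_4\}$-free, and then showing by a neighborhood-counting argument that the $b$ boundary black circles plus $i$ such special inner black circles force, via their required neighbors in the planar embedded graph of the black circles, a total count $n_B\geq 12$ whenever $i\geq 4$. Once this combinatorial minimum is verified, all pieces slot together to give the desired $(4+5/11)n$ bound.
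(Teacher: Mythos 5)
Your overall architecture matches the paper's: split at $n=11$, use the $4n+5$ bound of Lemma~\ref{4m} for $n\geq 11$, and for $n\leq 10$ return to Lemma~\ref{4mi} and show $i\leq 3$ so that the count drops to $4n$. The reduction is correct, including the algebra $4n+5\leq\left(4+\frac{5}{11}\right)n\iff n\geq 11$. However, the one statement that actually carries the proof --- \emph{if the red circle contains at most $11$ black circles, then at most $3$ inner black circles are orthogonal to two green circles} --- is exactly what you leave unproven, and your plan for it points in an unnecessarily complicated direction. You propose to enumerate how the (at most $8$) intersection points of the deep green circles can be distributed inside $C$ and then run a neighborhood-counting argument over all $i\geq 4$ special inner circles simultaneously. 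No such enumeration is needed, and without a concrete mechanism the claim that $i\geq 4$ forces $n_B\geq 12$ remains an assertion; this is a genuine gap at the central step.

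The missing idea (the paper's Lemma~\ref{lem:8around}) is much more local: a \emph{single} inner black circle $D$ with two green neighbors $E,F$ already forces at least $8$ boundary black circles. Since $v_D$ is an inner vertex of the planar embedded black graph, it is enclosed by an inclusion-minimal cycle $A$ of boundary-black vertices; each of $E$ and $F$ must reach $D$ from outside, hence each intersects at least two circles of $A$; by Lemma~\ref{justOne} the only black circle meeting both $E$ and $F$ is $D$, so $v_E$ and $v_F$ share no neighbor on $A$; and a neighbor of $v_E$ adjacent on $A$ to a neighbor of $v_F$ would close an induced $C_4$ with $v_E,v_F$ (which are adjacent, since $E$ and $F$ must intersect to avoid an induced $C_4$ through $D$ and $C$), contradicting Lemma~\ref{K4C4}. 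These separation constraints force $|A|\geq 8$, and then $i\geq 4$ gives $n_B\geq 4+8=12>11$ at once. Supplying this (or an equivalent) argument is what your proposal still owes.
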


\begin{theorem}\label{lem:4+n}
 The intersection graph of an arrangement of $n$ orthogonal circles has at most $\left(4+\frac{5}{11}\right) n$  edges. 
\end{theorem}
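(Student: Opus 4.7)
The plan is to prove Theorem~\ref{lem:4+n} by a straightforward peeling argument driven by Lemma~\ref{lem:4+m}. I will induct on the number of circles $n$. The base case $n=0$ is trivial since the empty intersection graph has zero edges, and $(4+\frac{5}{11})\cdot 0 = 0$.

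For the inductive step I apply Lemma~\ref{lem:4+m} to the arrangement $\mathcal A$ to extract a nonempty vertex subset $V_C$ of some size $k\ge 1$ in $G(\mathcal A)$ that is incident to at most $(4+\frac{5}{11})k$ edges. I then remove the $k$ circles corresponding to $V_C$ from $\mathcal A$ and consider the intersection graph of the remaining $n-k$ circles. A subcollection of an arrangement of orthogonal circles is again an arrangement of orthogonal circles, so the inductive hypothesis applies and bounds the number of remaining edges by $(4+\frac{5}{11})(n-k)$. Each edge of $G(\mathcal A)$ either has at least one endpoint in $V_C$ (and is counted by the bound from Lemma~\ref{lem:4+m}) or has both endpoints outside $V_C$ (and is counted by the inductive hypothesis), and these two cases are exhaustive and disjoint. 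Therefore $G(\mathcal A)$ has at most $(4+\frac{5}{11})k + (4+\frac{5}{11})(n-k) = (4+\frac{5}{11})n$ edges.

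At the level of Theorem~\ref{lem:4+n} there is essentially no obstacle: the only things to verify are the inductive bookkeeping above and the fact that removing circles from an orthogonal arrangement leaves an orthogonal arrangement, which is immediate from the definition. The main obstacle is really concentrated in Lemma~\ref{lem:4+m}, whose job is to defuse the additive $+5$ slack in Lemma~\ref{4m}: for large subsets $V_C$ the additive term is negligible and $4n+5 \le (4+\frac{5}{11})n$ already holds, but for small subsets one must argue separately (by the case analysis deferred to the appendix) that a sufficiently efficient $V_C$ can still be found. Once Lemma~\ref{lem:4+m} is in hand, Theorem~\ref{lem:4+n} follows by the one-line induction above.
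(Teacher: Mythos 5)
Your induction is correct and is essentially the paper's own argument: the paper phrases it as a minimal-counterexample extraction (remove the subset $S$ from Lemma~\ref{lem:4+m} and contradict minimality), which is just the contrapositive packaging of your inductive step. Both versions rest entirely on Lemma~\ref{lem:4+m} and the observation that a subcollection of an orthogonal arrangement is again one, so there is nothing to add.
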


\begin{proof}

 Assume there exist arrangements with $n$ orthogonal circles, whose intersection graphs have more than $\left(4+\frac{5}{11}\right) n$  edges. Consider a smallest such arrangement $\mathcal A$ in terms of numbers of circles and its intersection graph $G(\mathcal{A})=(V,E)$.
 By \autoref{lem:4+m} there exists a subset $S\subset V$ of $n'$ vertices 
 that is incident to at most $\left(4+\frac{5}{11}\right)n'$  edges.
We take out $S$ and all incident edges.
The new graph has $(n-n')$ vertices and more than $\left(4+\frac{5}{11}\right)(n-n')$ edges. This contradicts the assumption that  $\mathcal{A}$ is minimal.
\end{proof}


\section{Lower bounds}\label{sec:lowernest}


 \begin{figure} [htpb]   
\begin{center}
    	  \includegraphics[height=5.5cm]{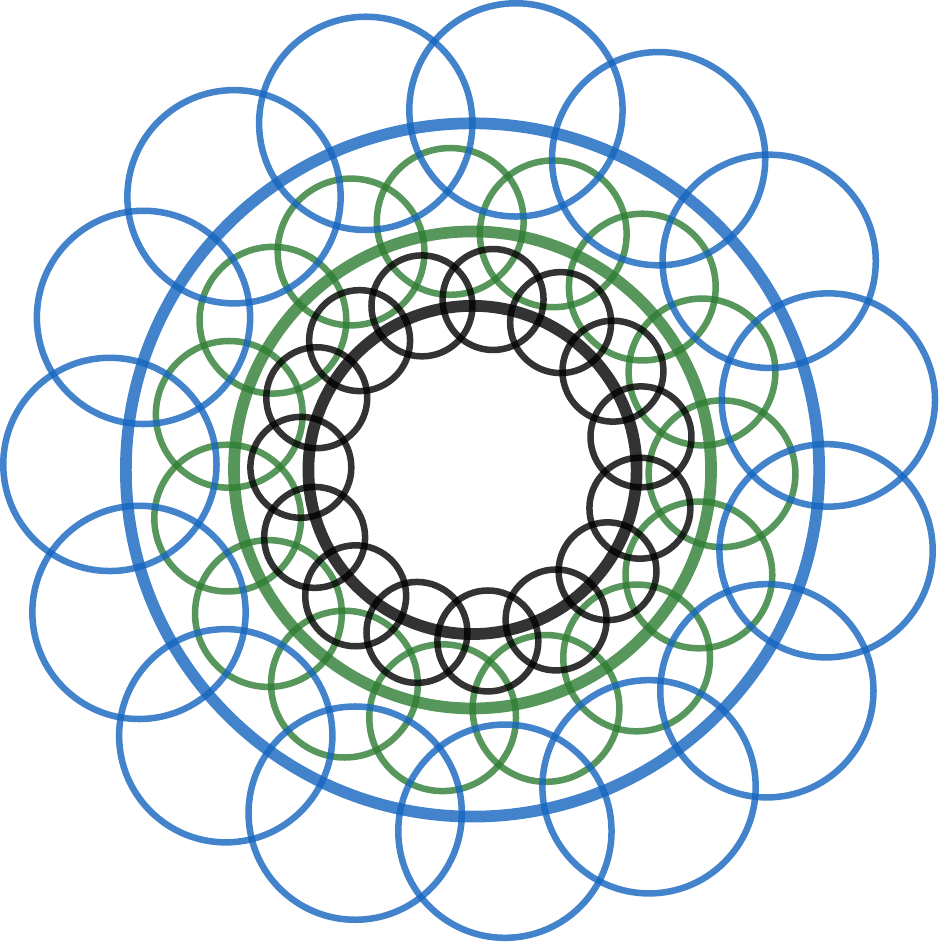}
   	  \caption{The arrangement $\mathcal B_{3,15}$. Hub circles are drawn with thick, satellite circles with thin lines. Corresponding satellite and hub circles have the same color.}\label{fig:LB8n}
\end{center}
\end{figure}

In this section we discuss lower constructions.
Our ideas are based on the arrangement $\mathcal B_{x,a}$, parametrized by two integers $a\ge 5$ and $x\ge 1$, which is constructed as follows.
We start with arranging $a$ circles with the same radius in such a way that their centers lie on a circle and two neighboring circles intersect. We call these circles the \emph{satellite circles}. We add another circle (called \emph{hub circle}) to this arrangement such that it intersects every satellite circle orthogonally. We name this arrangement a \emph{wheel of circles}. An arrangement $\mathcal B_{x,a}$ is then constructed by \enquote{nesting} $x$ wheels of circles with $a$ satellite circles each inside each other such that each satellite circle of one wheel intersects two satellite circles of the next wheel and two satellite circles of the previous wheel. The details of this 
construction (including the proof that the arrangement is orthogonal) can be found in Appendix~\ref{app:lower}.

\begin{lemma}\label{count}
 The intersection graph of $\mathcal B_{x,a}$ has $x \cdot (a+1)$ vertices and $4xa - 2a$ edges.
\end{lemma}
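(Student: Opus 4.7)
The plan is to count vertices and edges by structuring the contributions according to the inductive construction, namely per wheel and per pair of consecutive wheels.

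For the vertex count, I would simply observe that $\mathcal B_{x,a}$ is the union of $x$ wheels of circles, and each wheel consists of $a$ satellite circles plus one hub circle, giving $a+1$ circles. Since by construction these $x(a+1)$ circles are pairwise distinct, the intersection graph has $x(a+1)$ vertices.

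For the edge count, I would split the intersecting pairs into three classes and count each separately. \emph{Intra-wheel edges:} In a single wheel, the hub is orthogonal to each of the $a$ satellite circles (contributing $a$ edges) and each satellite intersects exactly its two cyclic neighbors on the ring (contributing another $a$ edges, since the satellites form a cycle). This gives $2a$ edges per wheel, hence $2ax$ edges in total. \emph{Inter-wheel edges:} Between two consecutive wheels, each satellite of the inner wheel intersects exactly two satellites of the outer wheel (and vice versa). A double-counting argument then yields $\tfrac{1}{2}(2a + 2a) = 2a$ edges between any two neighboring wheels, and since there are $x-1$ such pairs we obtain $2a(x-1)$ inter-wheel edges. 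Summing gives $2ax + 2a(x-1) = 4ax - 2a$, matching the claim.

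The only nontrivial point, and the main obstacle, is to show that there are \emph{no other} intersections beyond those just counted: in particular, that a hub circle of one wheel does not intersect any circle of a different wheel, and that satellites of nonconsecutive wheels are disjoint. For this I would appeal to the explicit construction in Appendix~\ref{app:lower}, where the wheels are strictly nested (the hub of wheel $i+1$ lies in the interior of the bounded region enclosed by wheel $i$'s hub) and the radii and placements are chosen so that satellites reach only across one level of nesting. This nesting structure, together with the fact that each hub contains all of its satellites' intersection pattern strictly inside it, immediately rules out hub--hub, hub--satellite (across wheels), and long-range satellite intersections. Once this separation is established, the three-class decomposition above is exhaustive, and the counts add to the stated $x(a+1)$ vertices and $4xa - 2a$ edges.
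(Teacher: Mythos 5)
Your proof is correct and matches the paper's in substance: the paper counts edges via the degree sum (hub vertices have degree $a$, satellites on the two extreme wheels have degree $5$, all other satellites degree $7$), which is just a handshake-lemma repackaging of your intra-wheel/inter-wheel edge partition, and both yield $4xa-2a$. Like you, the paper takes the adjacency structure (and the absence of any further intersections) as given by the explicit construction in Appendix~\ref{app:lower}, so your flagging of that point as the only nontrivial step is exactly right.
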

\begin{proof}
The arrangement consists of $x$ wheel of circles, each having $a$ satellite circles and one hub circles. Thus, the intersection graph has $x \cdot (a+1)$ vertices.
 Every vertex corresponding to a hub circle has clearly degree $a$.
Further, every vertex corresponding to a satellite circle has degree $7$, except those corresponding to a satellite circle on the inner or outermost wheel of circles, which have degree $5$.
So  the sum of the vertex degrees  is $\sum_{v\in V(G_{x,a})}\deg(v) = a x + 7 a (x-2)+ 5 a\cdot 2= 8xa-4a$. This number equals twice the number of edges, and therefore the intersection graph has
$4xa - 2a$ edges.
\end{proof}

\begin{restatable}[]{lemma}{lowerbound}\label{lem:lowerbound}
For every $n$ there is an arrangement of orthogonal circles, whose intersection graph has $n$ vertices and $4n-O\left(\sqrt{n}\right)$ edges. 
\end{restatable}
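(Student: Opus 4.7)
The plan is to instantiate the family $\mathcal{B}_{x,a}$ from \autoref{count} with balanced parameters and, if necessary, pad with a handful of isolated circles to hit exactly $n$ vertices. First I would choose $a = x$, so that \autoref{count} gives the arrangement $\mathcal{B}_{x,x}$ with $N := x(x+1)$ vertices and $4x^{2} - 2x$ edges. A direct calculation yields
\[
4x^{2} - 2x \;=\; 4N - 6x,
\]
and since $x = \Theta(\sqrt{N})$ this is already $4N - O(\sqrt{N})$, which settles the lemma for every $n$ of the form $x(x+1)$.

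For an arbitrary $n$, I would take the largest integer $x$ with $x(x+1) \le n$, so that $x = \Theta(\sqrt{n})$ and the deficit satisfies
\[
n - x(x+1) \;\le\; (x+1)(x+2) - x(x+1) \;=\; 2(x+1) \;=\; O(\sqrt{n}).
\]
To obtain an $n$-vertex arrangement, I would augment $\mathcal{B}_{x,x}$ with $n - x(x+1)$ pairwise disjoint small circles placed in the unbounded face, sufficiently far from $\mathcal{B}_{x,x}$ and from each other so that they meet no other circle. These padding circles contribute no new edges, and orthogonality is preserved vacuously since they participate in no intersection at all. The resulting arrangement has exactly $n$ vertices and
\[
4x^{2} - 2x \;=\; 4n - 4\bigl(n - x(x+1)\bigr) - 6x \;=\; 4n - O(\sqrt{n})
\]
edges, as required.

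The only genuinely nontrivial ingredient is that $\mathcal{B}_{x,a}$ is realizable as an arrangement of orthogonal circles, but this has already been discharged in Appendix~\ref{app:lower}. Everything else is an arithmetic choice of parameters together with the observation that adding isolated (far-away) circles affects neither the orthogonality condition nor the edge count, so the main obstacle here is purely bookkeeping: making sure that both the padding deficit and the additive term $-6x$ in the edge count are simultaneously $O(\sqrt{n})$, which is exactly what the choice $a = x$ with $x = \lfloor(\sqrt{4n+1}-1)/2\rfloor$ achieves.
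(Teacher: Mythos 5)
Your proposal is correct and follows essentially the same route as the paper: instantiate $\mathcal{B}_{x,a}$ with both parameters $\Theta(\sqrt{n})$ (the paper uses $x=\lfloor\sqrt{n}\rfloor$, $a=\lceil\sqrt{n}\rceil-2$ rather than $a=x$, an immaterial difference), pad with isolated circles to reach exactly $n$ vertices, and absorb the deficit and the $-2a$ term into the $O(\sqrt{n})$ error. The only caveat, which the paper also waves away via the big-$O$ term, is that $\mathcal{B}_{x,a}$ requires $a\ge 5$, so very small $n$ must be handled degenerately.
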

The proof of the lemma is obtained by counting the edges in the arrangement $\mathcal B_{x,a}$ with
$x$ and $a$ being $\Theta({\sqrt{n}})$. Details are given in Appendix~\ref{app:omitted}.

We now give a lower bound for nonnested orthogonal circles. The proof is given in Appendix~\ref{app:omitted}.

\begin{restatable}[]{lemma}{lowerNonNest}\label{lem:lowerNonNest}
 For every $n\ge 6$ for which $n \bmod 5 = 1$ the arrangement $\mathcal B_{((n-1)/5),5}$ with only the innermost hub circle is nonnested and its  $n$-vertex intersection graph has $3n-8$ edges.
\end{restatable}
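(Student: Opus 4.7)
The plan is to derive both the counts and the nonnestedness from the structure of $\mathcal B_{x,a}$ with $x=(n-1)/5$ and $a=5$, leveraging Lemma~\ref{count} together with the geometric properties of the construction described in Appendix~\ref{app:lower}. For the vertex count, the full arrangement has $x(a+1)=5x+x$ circles, and deleting $x-1$ hub circles leaves $5x+1=n$ circles, as required.

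For the edge count, I would use that in the proof of Lemma~\ref{count} each hub vertex has degree exactly $a$, so hub circles only intersect satellites and never each other. Hence deleting $x-1$ hubs removes exactly $(x-1)a=5(x-1)=n-6$ edges from the $4xa-2a=4(n-1)-10=4n-14$ edges of $\mathcal B_{x,a}$, leaving
\[
(4n-14)-(n-6)=3n-8
\]
edges in the reduced intersection graph.

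The main geometric content of the lemma is showing that the reduced arrangement is nonnested. In the full $\mathcal B_{x,a}$ nesting is caused only by the hub circles (each outer hub contains all inner wheels and their hubs), so after keeping only the innermost hub two kinds of potential nesting have to be ruled out: (i) between two satellite circles and (ii) between a satellite and the innermost hub. For (i), satellites of a single wheel are congruent and arranged on a common circle around the origin, so none contains another; satellites of different wheels are placed so that adjacent wheels' satellites intersect orthogonally and non-adjacent wheels' satellites lie in disjoint annular regions around the origin, so again no nesting occurs. For (ii), in the construction each satellite has radius smaller than the distance from its center to the common origin, so no satellite contains the origin and thus none contains the small innermost hub; conversely the innermost hub is the smallest circle in its own wheel, so it cannot contain any satellite.

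The main obstacle I anticipate is the geometric half, namely extracting from the explicit formulas for the radii and center distances in Appendix~\ref{app:lower} that for $a=5$ each satellite lies strictly outside the common origin and that rings of non-adjacent wheels are separated enough to exclude nesting. Once these geometric facts are verified, the edge and vertex counts above immediately yield the $3n-8$ bound claimed in the lemma.
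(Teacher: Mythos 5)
Your proposal is correct, and it arrives at the edge count by a different piece of bookkeeping than the paper. The paper recomputes the degree sum directly in the reduced graph: every vertex has degree $6$ except the six vertices of the innermost wheel (degree $5$) and the five outermost satellites (degree $4$), giving $2m=6n-16$ and hence $m=3n-8$. You instead reuse \autoref{count}: the full arrangement $\mathcal B_{(n-1)/5,5}$ has $4xa-2a=4n-14$ edges, and since the hub circles are concentric (hence pairwise nonadjacent) and each hub vertex has degree exactly $a=5$, deleting the $x-1$ outer hubs removes exactly $5(x-1)=n-6$ edges with no double counting, leaving $3n-8$. Both routes are valid; yours has the advantage of not re-deriving the degree classification, while the paper's is self-contained in the reduced graph. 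On nonnestedness you are actually more explicit than the paper, which only asserts it is ``nonnested by construction.'' Your case split (same wheel, adjacent wheels, non-adjacent wheels, satellite versus the one remaining hub) is the right skeleton, and the facts you need are available from Appendix~\ref{app:lower}: $s_i<d_i$ (for $a=5$ one has $d_i=s_i/(\sqrt2\sin(\pi/5))\approx 1.203\,s_i$), so no satellite contains the origin and hence none contains the origin-centered hub; circles that intersect transversally are never nested; and for non-adjacent wheels the annuli $[d_i-s_i,\,d_i+s_i]$ are indeed disjoint because the scaling factor satisfies $\alpha^2>(d_1+s_1)/(d_1-s_1)\approx 10.85$ (numerically $\alpha\approx 5.04$ for $a=5$). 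The gap you flag is therefore only the verification of these explicit inequalities, which the paper also omits; filling it in as above completes your argument.
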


\begin{figure} [htpb]   
\begin{center}
    	  \includegraphics[height=6cm]{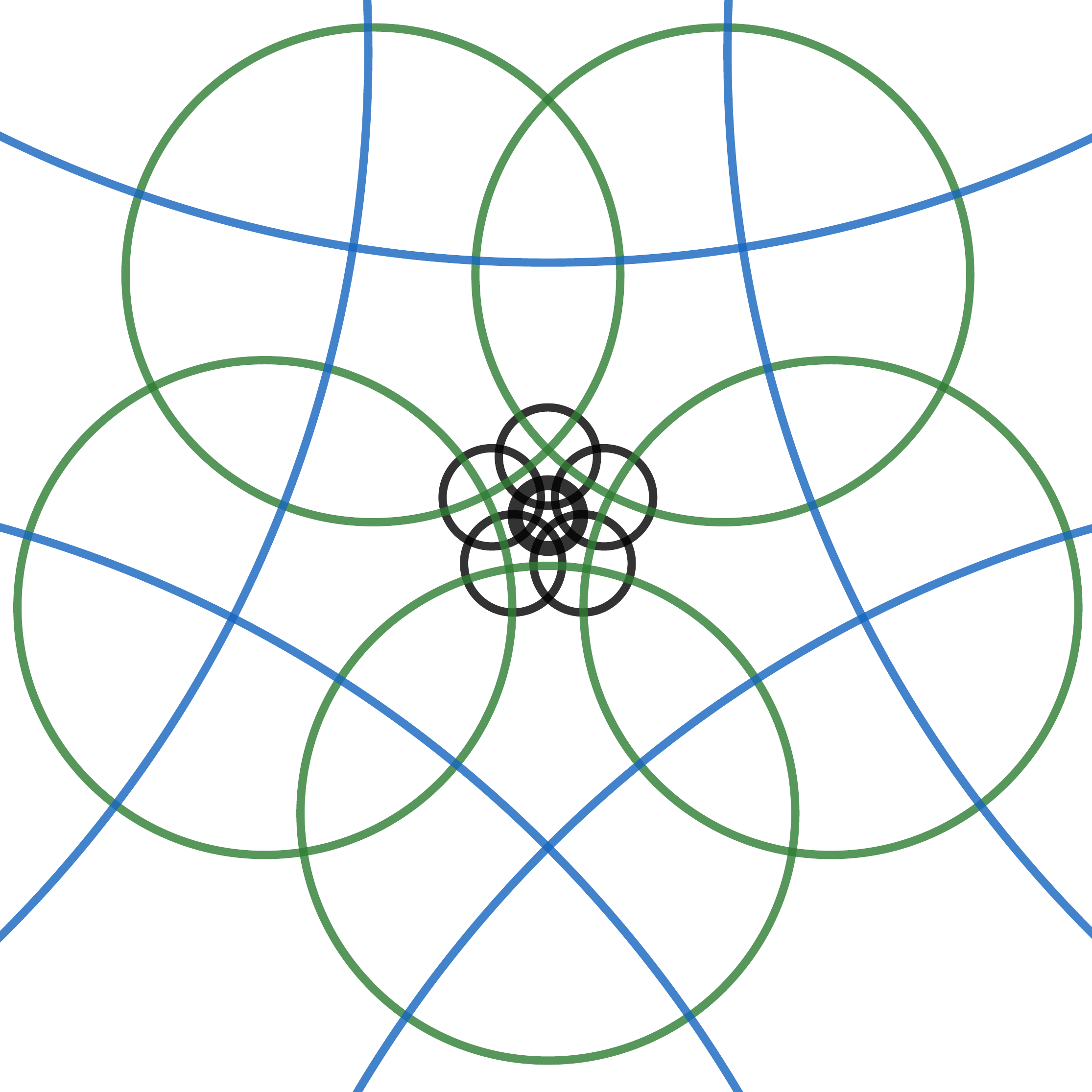}
   	  \caption{Detail of the arrangement used to prove the lower bound in the nonnested case in \autoref{lem:lowerNonNest}.}\label{fig:lowerbound5er}
\end{center}
\end{figure}

Chaplick~et~al.~\cite{cfkw19} investigated the maximal number of triangular cells in an orthogonal circle arrangement.  They proved an upper bound of $4n$ and gave a lower bound of $2n$ triangular cells, which they later improved to $3n-3$. This bound can be improved by taking the arrangement 
$\mathcal B_{x,a}$ and place a small (orthogonal) circle around every intersection point. This implies the following lemma which proof is given in Appendix~\ref{app:omitted}.

\begin{restatable}[]{lemma}{triangular}\label{lem:lowerTriangular}
  For infinitely many values of $n$ there is an arrangement of $n$ orthogonal circles with $\left(3+\frac{5}{9}\right)n-O\left(\sqrt{n}\right)$ triangular cells.
\end{restatable}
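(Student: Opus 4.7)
The plan is to follow the paragraph preceding the statement: start with the arrangement $\mathcal B_{x,a}$ of Lemma~\ref{count}, and for each pairwise intersection point of two of its circles place a tiny circle that is orthogonal to those two circles and contains the intersection point in its interior. Each such small circle will contribute four new triangular cells, and a careful count of triangles versus total circles will give the stated asymptotic.

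For the geometric construction I would work locally around one intersection point at a time. Given an intersection point $p$ of two circles $A,B\in\mathcal B_{x,a}$, the radical axis of $A$ and $B$ (the line through $p$ perpendicular to $C_AC_B$) is, by Observation~\ref{distance}, exactly the locus of centers of circles orthogonal to both $A$ and $B$. Placing the center at distance $\delta$ from $p$ along this axis and solving Observation~\ref{distance} for both $A$ and $B$ gives a unique radius $r_c$ of order $\sqrt{\delta}$; in particular $r_c>\delta$ for all small enough $\delta$, so the resulting circle $C$ contains $p$ in its interior. Taking $\delta$ small enough (smaller than the distance from $p$ to every other circle of $\mathcal B_{x,a}$ and than half the distance to every other intersection point) guarantees that $C$ meets only $A$ and $B$ and is disjoint from all other small circles I add. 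Inside the disk bounded by $C$, the arcs of $A$ and $B$ are chord-like arcs meeting only at $p$ and therefore split the disk into four sub-regions, each bounded by one arc of $A$, one of $B$, and one of $C$; since, by the smallness of $C$, no other circle of the full arrangement enters this disk, each of these four regions is a triangular cell of the resulting arrangement.

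For the counting, Lemma~\ref{count} gives $x(a+1)$ circles and $4xa-2a$ edges for $\mathcal B_{x,a}$, hence exactly $P:=8xa-4a$ intersection points (two per orthogonal pair). Adding one small circle per intersection point produces a total of $N=x(a+1)+8xa-4a=9xa+x-4a$ circles and at least $4P=32xa-16a$ triangular cells. Taking $x=a$ yields $N=9a^{2}-3a$ with $a=\Theta(\sqrt{N})$, and a direct simplification gives
\[
32a^{2}-16a \;=\; \frac{32}{9}\,N-\frac{16}{3}\,a \;=\; \left(3+\tfrac{5}{9}\right)N-O\!\left(\sqrt{N}\right),
\]
which proves the claim for the infinite family of values $N=9a^{2}-3a$. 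The main obstacle is the local geometric verification in the second paragraph: one has to show simultaneously that the small circle $C$ can be placed so as to be orthogonal to $A$ and $B$, to contain $p$, to be small enough not to disturb anything else in the arrangement, and to carve out four genuine triangular cells of the \emph{combined} arrangement rather than merely of $\{A,B,C\}$.
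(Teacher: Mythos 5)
Your proposal is correct and follows essentially the same route as the paper: decorate $\mathcal B_{x,a}$ with a tiny circle orthogonal to the two circles meeting at each intersection point, harvest four triangular cells per small circle, and choose $x,a=\Theta\left(\sqrt{n}\right)$ so that $4(8xa-4a)=\frac{32}{9}n-O\left(\sqrt{n}\right)$. The only cosmetic differences are that you justify the local construction via the radical axis and the power of a point rather than the paper's inversion argument, and you take $x=a$ to exhibit an explicit infinite family instead of padding to arbitrary $n$ --- both of which suffice for the statement as written.
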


\bibliographystyle{splncs04}
\bibliography{orthogonal.bib}

\appendix

\section{Omitted material from Section~\ref{sec:uppernonnest}}
\label{app:omitted2}

In Section~\ref{sec:uppernonnest} we proved that the embedded intersection graph 
is noncrossing. As a consequence Corollary~\ref{cor:1} gave us an upper bound on 
the number of its edges. Here, we also used the fact that for orthogonal arrangements
of at least five circles, the boundary face of the embedded intersection graph is at least
a pentagon. We now prove this fact, which was omitted in the main part due
to space constraints. We start with a helpful observation.

\begin{lemma}\label{lem:three}
 In an arrangement of three pairwise orthogonal circles every point in the triangle formed by the circle centers is covered by at least one circle. 
\end{lemma}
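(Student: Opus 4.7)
The plan is to argue by contradiction. Let the three pairwise orthogonal circles be $A,B,C$ with centers $C_A,C_B,C_C$ and radii $r_A,r_B,r_C$, and suppose some point $p$ in the triangle $C_AC_BC_C$ satisfies $|pC_X|>r_X$ for every $X\in\{A,B,C\}$. I will derive an impossible configuration of angles at $p$.

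The first key step is to convert each pair of distance inequalities into an angular bound via Thales' theorem. Adding the hypotheses for $A$ and $B$ and using \autoref{distance} to replace $r_A^2+r_B^2$ by $|C_AC_B|^2$ gives $|pC_A|^2+|pC_B|^2>|C_AC_B|^2$. By Apollonius' median identity $|pC_A|^2+|pC_B|^2=2|pM_{AB}|^2+\tfrac{1}{2}|C_AC_B|^2$, where $M_{AB}$ is the midpoint of $C_AC_B$, this is equivalent to $|pM_{AB}|>|C_AC_B|/2$; that is, $p$ lies strictly outside the Thales circle on diameter $C_AC_B$, so $\angle C_ApC_B<\pi/2$. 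Running the same argument on the two remaining pairs shows $\angle C_BpC_C<\pi/2$ and $\angle C_ApC_C<\pi/2$ as well.

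The second step is a simple angle-sum contradiction. For $p$ strictly inside the triangle the three angles just bounded partition the full turn at $p$ and therefore sum to $2\pi$, which is incompatible with each being less than $\pi/2$ since $3\cdot\pi/2<2\pi$. The boundary cases are easy to dispose of separately: at a vertex $p=C_X$ the hypothesis $|pC_X|>r_X$ already fails, while for $p$ in the relative interior of an edge $C_AC_B$ the angle $\angle C_ApC_B$ equals $\pi$, so $p$ lies strictly inside the Thales circle on $C_AC_B$, whence Apollonius again gives $|pC_A|^2+|pC_B|^2<r_A^2+r_B^2$ and at least one of the two distance inequalities must fail.

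I do not expect a real obstacle here. The only content is the translation ``distance inequality $\Rightarrow$ outside Thales circle $\Rightarrow$ acute angle'' and the elementary fact that the three angles at an interior point of a triangle sum to $2\pi$; the orthogonality hypothesis is used only through \autoref{distance}, so the argument is essentially a two-line application of Euclidean geometry.
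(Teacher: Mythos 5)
Your proof is correct and takes essentially the same route as the paper's: both derive $|pC_A|^2+|pC_B|^2>|C_AC_B|^2$ from \autoref{distance} and conclude that all three angles at $p$ are acute, which contradicts the fact that they sum to $2\pi$ at an interior point. The paper passes from the distance inequality to the acute angle via the converse of Pythagoras (law of cosines) where you use Apollonius' identity and the Thales circle, and you additionally handle the boundary cases explicitly, but these are cosmetic differences.
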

\begin{proof}
Let $A, B, C$ be three pairwise orthogonal circles with centers $C_A,C_B,C_C$ and radii $r_A,r_B,r_C$ respectively.

Suppose for contradiction a point $p$ inside the triangle $C_A C_B C_C$ not covered by any of the three circles. So its distance to any circle center is larger than the radius of that circle. Now consider the triangle $C_AC_Bp$, as depicted in \autoref{fig:three}.
We have $|C_Ap|>r_A$ and $|C_Bp|>r_B$ and $|C_AC_B|^2=r_A^2 + r_B^2$ since $A$ and $B$ intersect orthogonally. Combining these equations yields
$$
 |C_Ap|^2+|C_Bp|^2 = r_A^2 + r_B^2> |C_AC_B|^2.
$$
It follows that the angle between $C_Ap$ and $C_Bp$ in $p$ is acute. By the same argument the angles at $p$ between $C_Ap$ and $C_Cp$, and $C_Bp$ and $C_Cp$ are also acute. 
The sum of three acute angles is less than $2\pi$, thus we have a contradiction.
\end{proof}

\begin{figure} [htpb]   
\begin{center}
    	  \includegraphics[scale=0.6]{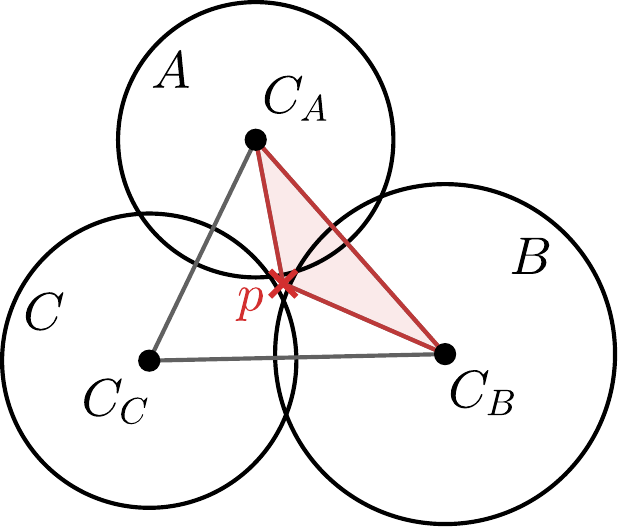}
   	  \caption{The point $p$ is inside the triangle $C_AC_BC_C$ but outside the circles $A$, $B$ and $C$}\label{fig:three}
\end{center}
\end{figure}

\begin{lemma}\label{lem:pentagon}
 The outer face of the embedded intersection graph of an arrangement of $n\geq5$ nonnested orthogonal circles 
 is adjacent to at least $5$ vertices.
\end{lemma}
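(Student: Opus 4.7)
The plan is to proceed by contradiction: suppose the outer face of the embedded intersection graph is adjacent to at most $4$ distinct vertices, and denote this set by $O$. Since $n\ge 5$, at least one vertex $v\notin O$ exists; its center $C_v$ lies strictly inside the convex hull of the centers of $O$, because every vertex of the convex hull of the full set of centers is automatically on the outer face. The strategy is then to show that the circles corresponding to $O$ together cover this convex hull; any such covering forces $C_v$ to lie in a circle different from $v$ and contradicts \autoref{lem1}.

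The cases $|O|\le 2$ are handled quickly, since they would make the convex hull a point or a segment and hence force every vertex into $O$, contradicting $n\ge 5$. The substantive cases are $|O|=3$ and $|O|=4$, which I would treat with a short case analysis.

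In the case $|O|=3$ with outer vertices $A,B,D$, the existence of an inner vertex $v$ requires the outer face boundary to enclose $v$; since only the three vertices of $O$ appear on this boundary, the only way to enclose anything at all is via the $3$-cycle on $A,B,D$. Hence the three circles are pairwise orthogonal, \autoref{lem:three} applies, and the triangle $C_AC_BC_D$ is covered by $A\cup B\cup D$. For $|O|=4$ with outer vertices $A,B,D,E$ in cyclic order along the outer face boundary, the four consecutive edges $AB,BD,DE,EA$ are present. \autoref{K4C4} forbids an induced $C_4$, so at least one diagonal, say $AD$, is also an edge, and it is drawn inside the quadrilateral by the non-crossing property from \autoref{thm:main}. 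The resulting triangles $ABD$ and $ADE$ each consist of pairwise orthogonal circles, so applying \autoref{lem:three} to each of them shows that $A\cup B\cup D\cup E$ covers the entire quadrilateral bounded by the four centers, and $C_v$ lies in this quadrilateral.

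The main obstacle is verifying that the outer face boundary really is a simple cycle on exactly the outer vertices, rather than a degenerate closed walk that revisits vertices through bridges or cut vertices. If such a degeneracy occurred, it would leave a ``gap'' in the outer boundary through which an inner vertex could reach infinity, pushing that vertex into $O$ and contradicting the size bound we assumed. Handling this cleanly only requires a small combinatorial observation that the existence of any inner vertex, together with the bound on $|O|$, forces the outer boundary into the simple $3$- or $4$-cycle form used in the main argument.
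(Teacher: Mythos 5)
Your proposal is correct and follows essentially the same route as the paper: a case analysis on whether the outer face has $3$ or $4$ vertices, using Lemma~\ref{K4C4} to force a diagonal in the $4$-vertex case and Lemma~\ref{lem:three} to show the resulting triangles of pairwise orthogonal circles are covered, so that Lemma~\ref{lem1} excludes any further center inside. The only difference is that you explicitly flag the degeneracy of the outer boundary walk, which the paper's proof passes over silently.
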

\begin{proof}
 
 Assume there exists a nonnested arrangement of $n\geq5$ orthogonal circles such that the outer face of its intersection graph is adjacent to exactly three vertices. It follows that there are three pairwise orthogonal circles, whose center lie on the vertices adjacent to the outer face, as seen in \autoref{fig:5a}. According to \autoref{lem:three} three pairwise orthogonal circles cover the whole triangle between their centers, thus according to \autoref{lem1} there cannot be another circle center within that triangle, a contradiction to $n\geq5$. 
 
 Assume there exists a nonnested arrangement of $n\geq5$ orthogonal circles such that the outer face of its intersection graph is adjacent to exactly four vertices. By \autoref{K4C4} these vertices cannot induce a $C_4$ or a $K_4$, so the induced graph is a $K_4$ minus one edge as shown in \autoref{fig:5b}. By \autoref{lem:three} there can be no further circle inside each of those triangles (see discussion in the previous paragraph). Thus, the arrangement consists of at most $4$ circles. 
 
 \begin{figure}[htbp]
\captionsetup{singlelinecheck=on,justification=raggedright}
    \centering
    \begin{subfigure}[t]{0.47\textwidth}
        \centering
        \includegraphics[scale=0.8]{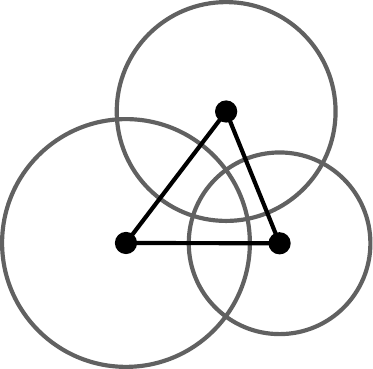} 
        \caption{An intersection graph and its circles where the outer face is adjacent to three vertices.}\label{fig:5a}
    \end{subfigure}\hfill
    \begin{subfigure}[t]{0.47\textwidth}
        \centering
        \includegraphics[scale=0.8]{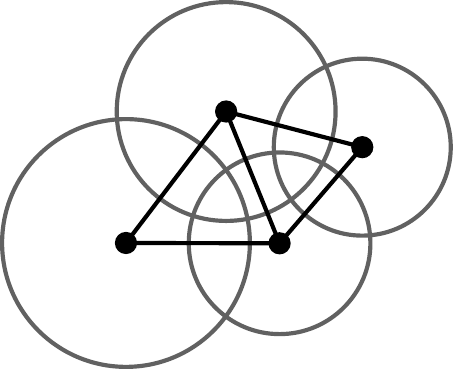} 
        \caption{An intersection graph and its circles where the outer face is adjacent to four vertices.}\label{fig:5b}
    \end{subfigure}
    \caption{}
        \label{fig:tight}
\end{figure}
 
 As we get a contradiction in both cases it follows that the outer face of the intersection graph of an arrangement of at least $n\geq5$ circles is adjacent to at least $5$ vertices.
\end{proof}

\section{Omitted material from Section~\ref{sec:uppernest}}\label{app:omitted}

Before we prove \autoref{8edges} we prove the following lemma.

\begin{lemma} \label{7erGraph}
 Let $G$ be a graph with at most seven vertices and no induced $C_3$ or $C_4$. Then one of the following holds:
 \begin{enumerate}
  \item[(i)]  $G$ has a vertex of degree at most $1$. 
  \item[(ii)]  $G$ is a $C_7$, $C_6$ or a $C_5$. 
  \item[(iii)] $G$ is two $C_5$s glued together at a path with two edges. 
 \end{enumerate}
\end{lemma}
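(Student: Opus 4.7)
The plan is to assume $G$ satisfies condition (i)'s negation, namely that every vertex has degree at least $2$, and show $G$ falls into case (ii) or (iii). Since the absence of an induced $C_3$ means $G$ is triangle-free and, combined with the absence of an induced $C_4$, any $4$-cycle subgraph would have to be chord-free (a chord produces a triangle), the hypothesis is equivalent to $G$ having girth at least $5$.

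First I would argue $G$ is connected: a component of min-degree $\geq 2$ contains a cycle, so by girth $\geq 5$ every component has at least $5$ vertices, and with $\leq 7$ vertices there can only be one. Then I split on the maximum degree $\Delta$. If $\Delta = 2$, the graph is $2$-regular and hence a single cycle whose length is in $\{5,6,7\}$, giving case (ii). Suppose now $\Delta \geq 3$ and pick $v$ with three neighbors $a,b,c$; by triangle-freeness $\{a,b,c\}$ is independent, and by absence of $C_4$ no two of $a,b,c$ share any neighbor other than $v$. Since each of $a,b,c$ needs at least one further neighbor, these further neighbors must lie outside $\{v,a,b,c\}$ and must be pairwise distinct. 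This forces at least $3$ vertices outside $\{v,a,b,c\}$, so $n=7$ and, after relabeling the external vertices as $d,e,f$, we have $a\sim d$, $b\sim e$, $c\sim f$. (In particular, this rules out $\Delta \geq 4$, since four neighbors of $v$ would already require $n \geq 1+4+4=9$.)

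Next I analyze edges among $\{d,e,f\}$ and from $\{d,e,f\}$ back to $\{v,a,b,c\}$. Each of $d,e,f$ needs degree $\geq 2$; but $d\sim v$ would create the triangle $vad$, while $d\sim b$ or $d\sim c$ would create a $C_4$ through $v$ (namely $v{-}a{-}d{-}b{-}v$ or $v{-}a{-}d{-}c{-}v$); symmetric statements hold for $e,f$. Thus each of $d,e,f$ must have its additional neighbor inside $\{d,e,f\}$, i.e.\ the induced subgraph on $\{d,e,f\}$ has minimum degree $\geq 1$ and is triangle-free. On three vertices the only such graph is a path with two edges, so after relabeling we may take $d \sim e \sim f$ with $d\not\sim f$.

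Finally I verify that the resulting graph is precisely two $C_5$'s glued along a path with two edges: the cycles $v{-}a{-}d{-}e{-}b{-}v$ and $v{-}b{-}e{-}f{-}c{-}v$ are both $C_5$'s sharing the path $v{-}b{-}e$, yielding conclusion (iii). I would close by checking (by inspection of the pairs of non-adjacent vertices) that every two non-adjacent vertices have at most one common neighbor, certifying girth $5$; the main obstacle is really this bookkeeping step of ruling out unwanted edges among $\{d,e,f\}$ and between $\{d,e,f\}$ and $\{v,a,b,c\}$, but each forbidden edge is quickly killed by a triangle or $C_4$ argument using the already-established edges.
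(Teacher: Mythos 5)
Your proof is correct, and it takes a genuinely different route from the paper's. The paper also begins by negating (i), but then organizes the case analysis around the length of a \emph{shortest cycle} (which must be $5$, $6$, or $7$ since forbidding an induced $C_3$ and $C_4$ forbids all $3$- and $4$-cycles): it shows that any vertex off a shortest $C_6$ or $C_7$ has degree at most $1$, and that in the girth-$5$ case with $7$ vertices the two off-cycle vertices must be joined to each other and attach to the pentagon at distance two, yielding (iii). You instead organize the argument around the \emph{maximum degree}: $\Delta=2$ gives the pure cycles of (ii), while a degree-$3$ vertex forces, via the independence of its neighborhood and the ``no common second neighbor'' condition, three further pairwise distinct vertices, hence $n=7$, $\Delta\le 3$, and a completely determined edge set that you then recognize as two $C_5$'s sharing a two-edge path. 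Both are elementary case analyses of comparable length; the paper's shortest-cycle framing makes the bound of Lemma~\ref{8edges} (cycle plus at most one extra ear, so at most $n+1$ edges) fall out almost for free, whereas your neighborhood-counting argument pins down the exceptional graph more explicitly and makes it immediately clear why it can only occur at exactly seven vertices. The only places where your write-up leans on the reader are small and sound: that each of $a,b,c$ gets \emph{exactly} one private neighbor among $d,e,f$ (any second one would recreate a $C_4$ through $v$), and that the simultaneous relabeling of the pairs $(a,d),(b,e),(c,f)$ is consistent; both are easy to make fully explicit.
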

\begin{proof}
If $G$ has at most four vertices, (i) obviously holds.
If $G$ has more than four vertices, we further analyse $G$ as follows.

If (i) does not hold, then $G$ contains a cycle. If the shortest cycle in $G$ has length $7$, (ii) holds. 

If the shortest cycle has length $6$, we have two cases: If $G$ has only $6$ vertices, then (ii) holds. Or if $G$ has $7$ vertices, consider the vertex $v$ not on that cycle of length six. 
If $v$ has at least two neighbors, these neighbors must lie on the $C_6$ and thus yield a shortcut. This is not possible since the $C_6$ is the shortest cycle in $G$. Hence the vertex $v$ is incident to at most one edge and (i) holds.

If the shortest cycle has length $5$, we have three cases. First, if $G$ has only $5$ vertices, then (ii) holds. Second, if $G$ has $6$ vertices, then the sixth vertex is incident to at most one edge by the same argument used in the case of $7$ vertices and a $C_6$ and (i) holds.
Thirdly, if $G$ has $7$ vertices and if (i) and (ii) do not hold, consider the two vertices $v,w$ not on the $C_5$. Since (i) does not hold $v,w$ have degree at least two. If either one of them has two neighbors on the $C_5$, it yields a shortcut and we would have a shorter circle. This contradicts the assumption that the $C_5$ is the shortest cycle in $G$, so both have just one neighbor on the cycle. Since $u$ and $v$ have degree of at least $2$ they are connected by an edge. Let $v'$ and $w'$ be the neighbors of $v$ and $w$ on the five-cycle, respectively. The length of the shortest path from $v'$ to $w'$ on the $C_5$ is either one or two. If it is one, then the path $v',v,w,w',v'$ would be a $C_4$, a contradiction. So the length of the shortest path is two and $G$ is two $C_5$ glued together at the path from $v'$ to $w'$. Thus, (iii) holds.
\end{proof}

\eightedges*
\begin{proof}
We delete vertices with degree~$1$ until no such vertex remains. Let $G'$ be the remaining graph.
Since $G'$ fulfils the condition (ii) or (iii) of \autoref{7erGraph} it can be made acyclic
by deleting at most two edges. Hence, also $G$ can be made acyclic by deleting at most two edges.
It follows that that $G$ has at most $(n-1)+2$ edges, if it has $n$ vertices. Thus, $n\le 7$
implies that $G$ has at most 8 vertices.
\end{proof}

In the remainder of this part we analyse arrangements with a small number of inner black circles 
to prove the upper bound of $\left(4+\frac{5}{11}\right)\cdot n$ edges in the intersection graph of an arrangement of $n$ circles.
We start with a slightly stronger statement of \autoref{4m}. 

\begin{corollary}\label{lem:3inner}
 In the intersection graph of every arrangement of orthogonal circles we can find a nonempty subset $V_C$ that is incident to at most $4n+5$ edges where $n=|V_C|$. Moreover, if at most three vertices 
 corresponding to inner black circles of $V_C$ are incident to two green edges each, then $V_C$ is incident to at most $4n$ edges.
\end{corollary}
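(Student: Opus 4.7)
The plan is straightforward: the first half of the statement is exactly \autoref{4m}, so only the \enquote{moreover}-clause needs additional argument, and this turns out to be a one-line deduction from \autoref{4mi} once one identifies the right bookkeeping quantity.

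My approach is to reuse the same $V_C$ used in the proof of \autoref{4m}. I would first pick, via \autoref{7er}, a red circle $C$ of depth $1$ orthogonal to at most seven deep circles, and set $V_C$ to be the vertex set corresponding to the collection $S_C$ of black circles properly contained in $C$. With this choice \autoref{4mi} bounds the number of edges incident to $V_C$ by $4n + i - 3$, where $n = |V_C|$ and $i$ is the number of inner black circles in $S_C$ that are orthogonal to two circles outside of $S_C$.

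The next step is to verify that this parameter $i$ coincides with the quantity in the hypothesis of the moreover-part, namely the number of inner-black-circle vertices incident to two green edges. Any circle $D \notin S_C$ intersecting a black circle $A \in S_C$ must also intersect the red circle $C$: since $A$ is properly contained in $C$, the circle $D$ has a point inside $C$, and by \autoref{lem1} the center of $D$ lies outside $C$, forcing $D$ to cross the boundary of $C$. By definition this makes $D$ a green circle, so every edge from a black circle to a non-black circle is a green edge, and the two counts match exactly.

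Under the hypothesis that at most three inner-black-circle vertices of $V_C$ are incident to two green edges each, we conclude $i \le 3$, and \autoref{4mi} then yields an edge count of at most $4n + 3 - 3 = 4n$, as claimed. No step looks particularly delicate; the only point that warrants care is the identification of $i$ with the count of inner circles having two green edges, but this is immediate from the colour definitions once \autoref{lem1} has been invoked to place the centres of the relevant non-black circles outside of $C$.
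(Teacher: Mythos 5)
Your proof is correct and takes essentially the same route as the paper, which likewise settles the ``moreover'' clause by rerunning \autoref{4mi} with the parameter $i\le 3$, giving $4n+3-3=4n$ edges. One cosmetic point: \autoref{lem1} is stated only for nonnested arrangements, so rather than invoking it to place the center of a non-black circle $D$ outside the red circle $C$, it is cleaner to note that $D$ has a point inside $C$ (it crosses a black circle) yet is not properly contained in $C$ (it is not black), which already forces $D$ to intersect $C$ and hence be green.
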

\begin{proof}
	We can follow the proof of \autoref{4mi} and use the fact that for the \enquote{Moreover}-part
	we have $i_2\le 3$.
\end{proof}

Before proving the desired bound in \ref{lem:4+m} we provide some necessary lemmas.

\begin{lemma}\label{lem:8around}
Let $C$ be the red circle in an orthogonal circle arrangement and let $S_C$ be the set of the 
black circles properly nested in $C$. 
 If an inner black circle of $S_C$ is intersected by two green circles, 
 then $S_C$ contains at least $8$ boundary black circles. 
\end{lemma}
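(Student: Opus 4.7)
The plan is to exploit the geometry of the two green circles $E_1, E_2$ intersecting $F$ together with planarity of the embedded intersection graph $G(S_C)$, which is available via \autoref{thm:main} because the red circle $C$ has depth $1$ and therefore all black circles in $S_C$ form a nonnested arrangement. As a preliminary step I would show that $E_1$ and $E_2$ must themselves intersect: if not, the circles $C,F,E_1,E_2$ would form an induced $C_4$ in the intersection graph, contradicting \autoref{K4C4}. Since the arrangement is orthogonal we then have $E_1\perp E_2$, and by \autoref{intersection} they meet in exactly two points, one inside $C$ and one outside. The assumption that $F$ is a single black circle containing both $E_1$ and $E_2$ together with \autoref{justOne} forces this interior intersection point $p$ to lie in $F$.

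The two arcs of $E_1$ and $E_2$ inside $C$ therefore cross transversely at $p\in F$ and partition the portion of the interior of $C$ outside $F$ into four open sectors $\Sigma_1,\Sigma_2,\Sigma_3,\Sigma_4$, each touching $\partial C$ and $\partial F$ in nondegenerate arcs. Because $F$ is an inner black circle, its center is strictly interior to the outer face of $G(S_C)$, so the outer-face boundary is a closed walk $K$ of boundary black circles that topologically encircles $F$. I would then argue that $K$ visits each sector: otherwise a whole sector $\Sigma_i$ would provide a curve in the plane from $\partial C$ to a point near $F$ avoiding $K$, contradicting that $K$ separates $F$ from the unbounded face of $G(S_C)$. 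The main claim is then that $K$ contains at least two distinct boundary black circles per sector, giving $|K|\ge 8$.

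The heart of the argument, and the main obstacle, is the \emph{two per sector} step. My plan is a contradiction: suppose some sector $\Sigma_i$ contained only one boundary black circle $B$. Then the two edges of $K$ at $B$ would each have to cross one of the bounding arcs of $E_1$ or $E_2$, and using \autoref{lem1}--\autoref{lem3} about the geometry of centers relative to orthogonal circles I would translate each such straight-line crossing into an actual orthogonal intersection of the corresponding green circle with the adjacent vertices of $K$. Together with $C$ and $F$ this would exhibit either a $K_4$ or an induced $C_4$ in the intersection graph, contradicting \autoref{K4C4}. Summing the bound of two boundary black circles over the four sectors then yields the claimed total of at least eight. The delicate part of carrying this out rigorously is handling the case analysis cleanly and, in particular, ruling out degenerate configurations in which the outer walk $K$ is not a simple cycle or in which the single candidate circle $B$ sits exactly on an arc of $E_1$ or $E_2$ and needs to be attributed to a unique sector.
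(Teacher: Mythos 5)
Your setup matches the paper's: the two green circles must intersect (else $C$, the inner black circle and the two greens induce a $C_4$), exactly one of their intersection points lies inside $C$ and, by Lemma~\ref{justOne}, it lies in the unique inner black circle they both meet; and that inner circle is enclosed by a cycle of boundary black circles in the embedded intersection graph. Your sector picture also captures the right topological fact (the two green arcs cross \emph{inside} the enclosing cycle, which is what forces their traces on the cycle to interleave). But the proof is not complete: the entire bound rests on the claim \enquote{at least two boundary black circles per sector,} which you explicitly defer, and the strategy you sketch for it does not obviously close. If a sector's portion of the cycle $K$ contains a single vertex $B$, the two incident edges of $K$ cross the arcs of $E_1$ and $E_2$, and each crossing point lies inside one of the two disks spanning that edge --- but it may lie inside the \emph{neighbor} of $B$ rather than inside $B$. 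In that case you only obtain a $5$-cycle $E_1,a,B,a',E_2$ in the intersection graph, which is not forbidden, so no $K_4$ or induced $C_4$ materializes. A further problem is your intended use of Lemmas~\ref{lem1}--\ref{lem3} for the green circles: those lemmas are proved for \emph{nonnested} arrangements, and by Lemma~\ref{inner} every green circle meeting an inner black circle is deep, so the sub-arrangement consisting of a green circle together with the black circles need not be nonnested and the lemmas do not apply as stated. You would also need to rule out that a sector's portion of $K$ contains \emph{no} vertex at all (a single edge of $K$ crossing both bounding arcs).

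The paper avoids all of this by counting on the cycle rather than in the sectors. Let $A$ be the (inclusion-minimal) cycle of boundary black circles enclosing the inner circle. Each green circle must intersect at least two circles of $A$; by Lemma~\ref{justOne} the two green circles have no common black neighbor other than the inner circle, so these four cycle vertices are distinct; and no neighbor of $E_1$ on $A$ can be adjacent to a neighbor of $E_2$ on $A$, since together with $v_{E_1}$ and $v_{E_2}$ (which are adjacent) such a pair would induce a $C_4$, contradicting Lemma~\ref{K4C4}. Combined with the interleaving of the four neighbors around $A$ --- which is exactly what your crossing-arcs observation provides --- each of the four gaps between consecutive marked vertices must contain an extra vertex, giving $|A|\ge 8$. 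I recommend you keep your topological observation as the justification for interleaving, but replace the per-sector count by this neighbor-counting argument.
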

\begin{proof}
Let $D$ be an inner black circle that is intersected by the two green circles $E$ and $F$. Since  both $E$ and $F$ intersect $D$ and $C$, they must intersect each other. Otherwise the corresponding vertices of $E,D,F$ and $C$ would induce a $C_4$, which would violate \autoref{K4C4}. Thus, $E$ and $F$ intersect and according to \autoref{intersection} $D$ contains one of the intersection points.
If $D$ is an inner circle, then its corresponding vertex $v_D$ in the (embedded) 
intersection graph $G_C$ of $S_C$ is not adjacent to the outer face. So $v_D$ lies on the interior 
of a cycle of vertices corresponding to boundary black circles. 
Let $A=(a_1,...,a_x)$ be the such a cycle that is inclusion-minimal. 

Every green circle intersecting $D$ must also intersect at least two circles corresponding to vertices in $A$. So the vertices $v_E$ and $v_F$ corresponding to the green circles $E$ and $F$ must each have at least two neighbors in $A$. According to \autoref{justOne} and since the black circles are not nested, there can only be one black circle intersecting both $E$ and $F$. This circle is the circle $D$. Thus, the vertices $v_E$ and $v_F$ cannot have a common neighbor.

Let $a_i,a_j$ be the neighbors of $E$ and $a_k,a_l$ be the neighbors $F$. If two of the neighbors are adjacent, say $a_i$ and
$a_k$ then $a_i,v_E,a_k,v_F$ induce a $C_4$, which would contradict \autoref{K4C4}. Thus, $A$ muss contain at least 8 vertices.
\end{proof}

As an immediate consequence from the last lemma we get.
\begin{lemma}\label{lem:m0} 
Let $C$ be the red circle in an orthogonal circle arrangement and let $S_C$ be the set of the circles properly nested in $C$. If $|S_C|\leq 11$, then $S_C$ contains at most three inner black circles that each are intersected by two green circles.
\end{lemma}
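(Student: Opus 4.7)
The plan is to reduce the statement to a direct count using Lemma~\ref{lem:8around}. If no inner black circle of $S_C$ is intersected by two green circles, then the number of such circles is $0 \le 3$ and the claim holds trivially. So I may assume there is at least one inner black circle $D \in S_C$ that is intersected by two green circles.

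Given the existence of such a $D$, Lemma~\ref{lem:8around} immediately tells me that $S_C$ must contain at least $8$ boundary black circles. Combining this with the hypothesis $|S_C|\le 11$ yields
\[
    \#\{\text{inner black circles in } S_C\} \;=\; |S_C| - \#\{\text{boundary black circles in } S_C\} \;\le\; 11 - 8 \;=\; 3.
\]
In particular, the number of inner black circles in $S_C$ that happen to be intersected by two green circles is bounded above by the total number of inner black circles, which is at most $3$. This finishes the proof.

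Since the argument is essentially a one-line application of Lemma~\ref{lem:8around} together with the size bound $|S_C|\le 11$, there is no real obstacle; the only thing to be careful about is distinguishing the trivial case (no offending inner black circle at all) from the case where Lemma~\ref{lem:8around} kicks in, which is handled by the split above.
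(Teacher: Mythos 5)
Your proof is correct and is essentially the paper's argument in contrapositive form: the paper assumes four offending inner black circles and derives at most $7$ boundary black circles, contradicting Lemma~\ref{lem:8around}, while you apply Lemma~\ref{lem:8around} directly to get at least $8$ boundary black circles and hence at most $11-8=3$ inner black circles in total. Both rest on the same two ingredients (Lemma~\ref{lem:8around} and the partition of $S_C$ into boundary and inner black circles), so no further comparison is needed.
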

\begin{proof}
 Assume for contradiction that $|S_C| \leq 11$ and $S_C$ contains four inner circles that are intersected by two green circles each. It follows that $S_C$ contains at most seven boundary black circles. 
 This is a contradiction to \autoref{lem:8around}.
\end{proof}

\begin{lemma}\label{lem:atMost18}
 In the intersection graph of an arrangement of orthogonal circles we can find a nonempty subset $V_C$ that is incident to at most $4n+5$ edges, where $n=|V_C|$. Moreover, if $n\leq 11$, then the subset is incident to at most $4n$ edges.
\end{lemma}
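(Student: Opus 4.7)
The plan is to view this lemma as a packaging of two earlier results: \autoref{lem:3inner} (a restatement of \autoref{4m} together with a stronger conditional bound) and \autoref{lem:m0} (which caps the number of doubly green-intersected inner black circles when the red circle contains few black circles). No new geometric argument is needed; the entire statement follows by chaining these two facts.

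To carry this out, I would first invoke \autoref{7er} to pick a red circle $C$ with depth $t(C)=1$ that is orthogonal to at most seven deep circles, and let $V_C$ be the vertex set corresponding to the black circles $S_C$ properly contained in $C$. This is exactly the choice made in the proof of \autoref{4m}. The first conclusion of the lemma, that $V_C$ is incident to at most $4n+5$ edges with $n=|V_C|$, is then literally the statement of \autoref{4m} (and of the first sentence of \autoref{lem:3inner}), so nothing more is required for this part.

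For the \emph{moreover} clause, I would assume $n=|V_C|\le 11$. Applying \autoref{lem:m0} to $S_C$ then forces at most three inner black circles in $S_C$ to be intersected by two green circles each. This is precisely the hypothesis in the strengthened clause of \autoref{lem:3inner}. Invoking that clause (with the parameter $i_2\le 3$ in the notation used in the proof of \autoref{4mi}) yields the improved bound of $4n$ edges incident to $V_C$.

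The only subtlety I foresee is the bookkeeping step of confirming that the subset $V_C$ to which the general $4n+5$ bound is applied is genuinely the same subset to which the hypothesis $|S_C|\le 11$ of \autoref{lem:m0} refers. This is immediate, since both lemmas are formulated in terms of the black circles inside a red circle chosen via \autoref{7er}, so no further geometric work or case analysis is needed beyond this verification.
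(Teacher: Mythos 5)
Your proposal is correct and follows essentially the same route as the paper: the first bound is read off from \autoref{lem:3inner}, and for $n\le 11$ one invokes \autoref{lem:m0} to cap the number of inner black circles with two green intersections at three, which triggers the strengthened clause of \autoref{lem:3inner}. The bookkeeping point you flag (that $|V_C|=|S_C|$ so the two lemmas apply to the same set) is the only detail worth stating, and you handle it correctly.
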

\begin{proof}
According to \autoref{lem:3inner} there is a non empty subset $V_C$ in the intersection graph of every orthogonal circle arrangement that is incident to at most $4n+5$ edges, where $n=|V_C|$.
However, according to \autoref{lem:m0}, if $n<11$ then there are at most $3$ inner black 
circles that are orthogonal to two green circles. In this case, due to \autoref{lem:3inner}, $V_C$ is incident to at most $4n$ edges.

\end{proof}

\four*

\begin{proof}
According to \autoref{lem:atMost18} there is a nonempty subset $V_C$ in the intersection graph of every orthogonal circle arrangement that is incident to at most $4n+5$ edges, where $n=|V_C|$. Moreover, if $n\leq 11$, then the subset is incident to at most $4n$ edges.

If $n\leq 11$, then $V_C$ is incident to at most $4n<\left(4+\frac{5}{11}\right) n$ edges.

If $n>11$, then $V_C$ is incident to at most $4n+5=\left(4+\frac{5}{n}\right) n\leq\left(4+\frac{5}{11}\right)n$ edges.
\end{proof}

\lowerbound*

\begin{proof}
We set $x=\lfloor \sqrt{ n} \rfloor$ and $a= \lceil \sqrt{n} \rceil -2$.
Note that for any positive real number
$t$ we have $\lfloor t \rfloor (\lceil t \rceil -1) < t^2 < (\lfloor t \rfloor +1) \lceil t \rceil $.
Hence, for our choice of parameters the arrangement $\mathcal B_{x,a}$ has by \autoref{count} 

$$ x\cdot (a+1) = \lfloor \sqrt{ n} \rfloor \left( \lceil \sqrt{ n} \rceil -1\right)  < n$$
vertices. 
We introduce additional independent circles so that $B_{x,a}$ has exactly $n$ vertices.
Also by Lemma~\ref{count}
we get that this arrangement has at least

\begin{align*}
	4xa-2a & = 4 (\lfloor \sqrt{ n} \rfloor)( \lceil \sqrt{n} \rceil -2) - 2 \lceil \sqrt{n} \rceil +4 \\
	  & = 4 (\lfloor \sqrt{ n} \rfloor+1 -1) \lceil \sqrt{n} \rceil - 8 \lfloor \sqrt{n} \rfloor- 2 \lceil \sqrt{n} \rceil +4  \\
	  & = 4 (\lfloor \sqrt{ n} \rfloor+1) \lceil \sqrt{n} \rceil  - O(\sqrt{n})  \\
	  & > 4n - O(\sqrt{n})\\
\end{align*}
 many edges and the statement of the lemma follows. Note that for very small $n$ our construction 
 is degenerate, which is however covered by the big-O error term.
%
%
%
\end{proof}

\lowerNonNest*
\begin{proof}
 Let $G$ be the intersection graph of $\mathcal B_{(n-1)/5,5}$ in which we have deleted all but the innermost hub circle and let $m$ its number of edges. The arrangement is nonnested by construction.
Every vertex in $G$ has degree $6$, except the vertices that correspond to the inner most satellite circles, the only hub circle and the vertices adjacent to the outer face of $G$. The $6$ vertices corresponding to inner most satellite circles and the one hub circle have degree $5$. The $5$ vertices on the outer face of $G$ have degree $4$. This gives us $2m = \sum_{v\in V(G)}\deg(v)=6n-6-5\cdot2=6n-16$. Thus, the intersection graph has $n$ vertices and $3n-8$ edges. 
\end{proof}

\triangular*
\begin{proof}
 We construct the arrangement $\mathcal A$ by taking the arrangement $\mathcal B_{x,a}$ described above and drawing a small circle over every intersection point, such that the small circle only intersects the two circles corresponding to the intersection point. That this is always possible can be seen by taking an inversion in a circle with center on the intersection point. The circles corresponding to the intersection point turn into straight lines. We can now draw a circle around their intersection point that does not intersect any other circles. Reversing the inversion gives us the small circle.
 Note that the inversion is a Möbius transformation and therefore conformal.
 Each of those new circles induces four triangular cells.
 

%
%

  According to \autoref{count} the intersection graph of $\mathcal B_{x,a}$ has $4xa-2a$ edges. Thus, $\mathcal B_{x,a}$ has $8xa-4a$ intersection points. The arrangement $\mathcal A$ has 
 therefore $n=x \cdot (a+1)+8xa-4a$ circles and at least $4\cdot (8xa-4a)$ triangular cells. 
 
 We set $x=2/3\left(\lceil\sqrt{n}\rceil-1\right)$ and $a=1/6\lfloor \sqrt{n} \rfloor$. Note that for any positive real number
$t$ we have $\lfloor t \rfloor (\lceil t \rceil -1) < t^2 < (\lfloor t \rfloor +1) \lceil t \rceil $. Hence, for our choice of parameters the arrangement $A$ has 
\begin{align*}
 9xa-4a+x &= 9\cdot \frac{2}{3}\left(\lceil\sqrt{n}\rceil-1\right) \cdot \frac{1}{6}\lfloor \sqrt{n} \rfloor - 4 \cdot \frac{1}{6}\lfloor \sqrt{n} \rfloor + \frac{2}{3}\left(\lceil\sqrt{n}\rceil-1\right) \\
 &=\left(\lceil \sqrt{n}\rceil - 1 \right)\lfloor \sqrt{n} \rfloor -\frac{2}{3}\left(\lfloor\sqrt n \rfloor - \lceil \sqrt n \rceil +1\right) \\
 &\leq \left(\lceil \sqrt{n}\rceil - 1 \right)\lfloor \sqrt{n} \rfloor < n
\end{align*}
circles. 
We fill $\mathcal A$ up such that it has exactly $n$ circles. So this arrangement has at least
\begin{align*}
 32xa-16a &= 32 \cdot \frac{2}{3}\left(\lceil\sqrt{n}\rceil-1\right) \cdot \frac{1}{6}\lfloor \sqrt{n} \rfloor - 16 \cdot \frac{1}{6}\lfloor \sqrt{n} \rfloor \\
 &=\frac{32}{9} \left(\lceil\sqrt{n}\rceil-1\right) \lfloor \sqrt{n} \rfloor - \frac{8}{3} \lfloor \sqrt{n} \rfloor \\
 &=\frac{32}{9} \lceil\sqrt{n}\rceil \left(\lfloor \sqrt{n} \rfloor +1-1\right) -\frac{32}{9}\lfloor \sqrt{n} \rfloor - \frac{8}{3} \lfloor \sqrt{n} \rfloor \\
 &=\frac{32}{9} \lceil\sqrt{n}\rceil \left(\lfloor \sqrt{n} \rfloor +1\right) - \frac{32}{9}\lceil \sqrt{n} \rceil -\frac{32}{9}\lfloor \sqrt{n} \rfloor - \frac{8}{3} \lfloor \sqrt{n} \rfloor \\
 &>\frac{32}{9} n - O(\sqrt n)
\end{align*}
many triangular cells and the statement of the lemma follows. 
\end{proof}

\section{Acute nonnested circle arrangements.}
\label{app:acute}
We show in this part that Theorem~\ref{thm:main} also holds in a more 
general setting, namely, the embedded intersection graph is also noncrossing
if the intersections of circles is at an angle of at most $\pi/2$. We call circle arrangements
with this property \emph{acute (nonnested) circle arrangements}. The proof of Theorem~\ref{thm:main}
relies on the combination of Lemma~\ref{lem1}--\ref{lem3}. We prove now these three
lemmas for acute nonnested circle arrangements.

%
%

\begin{lemma}\label{lem1a}
 In an acute nonnested circle arrangement the center of a circle $A$ is not contained in a circle except in $A$.
\end{lemma}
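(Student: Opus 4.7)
The plan is to mimic the proof of \autoref{lem1} with the Pythagorean identity from \autoref{distance} replaced by a law-of-cosines inequality. Concretely, my goal is first to show that for any two circles $A$ and $B$ that meet at an angle of at most $\pi/2$, one still has $|C_AC_B|^2 \ge r_A^2 + r_B^2$, and then to re-run the contradiction argument of \autoref{lem1} with $=$ relaxed to $\ge$.

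For the key inequality I would pick any intersection point $u$ and work in the triangle $C_AuC_B$. Since the tangent to a circle at $u$ is perpendicular to the radius at $u$, the angle $\angle C_AuC_B$ is exactly the supplement of the angle between the two tangent rays that point into the lens $A\cap B$. By hypothesis this lens angle is at most $\pi/2$, so $\angle C_AuC_B\ge\pi/2$ and hence $\cos(\angle C_AuC_B)\le 0$. The law of cosines then gives
\[
    |C_AC_B|^2 \;=\; r_A^2 + r_B^2 - 2\,r_Ar_B\cos(\angle C_AuC_B) \;\ge\; r_A^2 + r_B^2,
\]
which is the acute analogue of \autoref{distance}.

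To finish, I would assume for contradiction that $C_A$ lies strictly inside some circle $B\ne A$, giving $|C_AC_B|<r_B$. If $A$ and $B$ do not cross, then since $C_A$ lies in the disk bounded by $B$ the two disks cannot be externally disjoint, so one is contained in the other, contradicting the nonnestedness hypothesis. Hence $A$ and $B$ must intersect, and the inequality from the previous paragraph yields $|C_AC_B|^2\ge r_A^2+r_B^2>r_B^2$ (using $r_A>0$), which contradicts $|C_AC_B|<r_B$.

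The only real subtlety is fixing the convention for the ``intersection angle'' so that \emph{at most $\pi/2$} translates into $\angle C_AuC_B\ge\pi/2$; taking this angle to be the interior angle of the digonal lens $A\cap B$ at the corner $u$ makes the correspondence exact, and the rest of the argument is a direct transcription of \autoref{lem1}. I expect no geometric obstacle beyond this bookkeeping, because $\cos\le 0$ is precisely the weakening of $\cos=0$ needed to turn the Pythagorean equality into a one-sided inequality strong enough for the contradiction.
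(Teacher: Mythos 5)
Your proposal is correct and follows essentially the same route as the paper: the paper likewise uses the law of cosines $|C_AC_B|^2=r_A^2+r_B^2-2r_Ar_B\cos(\pi-\alpha)$ with $0\le\alpha\le\pi/2$ forcing $\cos(\pi-\alpha)\le 0$, and then derives the same contradiction with $|C_AC_B|<r_B$. The angle convention you worry about is exactly the one the paper adopts, so no further bookkeeping is needed.
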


\begin{proof}
 Let $A$ and $B$ be two nonnested circles with centers $C_A$ and $C_B$ and radii $r_A$ and $r_B$, respectively. Assume that $C_A$ lies inside $B$.  Obviously, $A$ and $B$ intersect in some angle $\alpha$, since otherwise the circles are nested.
 It holds that $|C_AC_B|^2=r_A^2+r_B^2 -2r_Ar_B  
 \cos(\pi-\alpha)$. Further, since $C_A$ is in $B$, we have $|C_AC_B|<r_B$ and thus $|C_AC_B|^2<r_B^2$.  We get that  $r_A^2+r_B^2 -2r_Ar_B  \cos(\pi-\alpha)< r_B^2$. Since $0\leq \alpha \leq \frac{\pi}{2}$ it follows that $0 \geq \cos(\pi-\alpha)\geq -1$ and thus $r_A^2+r_B^2 < r_B^2$. Since $r_A,r_B \in \mathbb R$ this is a contradiction.
 \end{proof}

\begin{lemma}\label{lem2a}
 In an acute nonnested circle arrangement for every pair of circles $A$ and $B$ and every point $p$ on $A$ it holds that $B$ intersects $C_Ap$ in at most one point.
\end{lemma}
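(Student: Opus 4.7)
The plan is to mimic the proof of Lemma~\ref{lem2}, replacing the Pythagorean identity $|C_AC_B|^2=r_A^2+r_B^2$ by the inequality $|C_AC_B|^2 \ge r_A^2+r_B^2$, which holds whenever $A$ and $B$ cross at an angle $\alpha\in[0,\pi/2]$. To justify this inequality I would apply the law of cosines at an intersection point $u$ of $A$ and $B$: since the two radii at $u$ are perpendicular to the respective tangents, the interior angle of the triangle $C_AuC_B$ at $u$ equals $\pi-\alpha$, and so
\[
|C_AC_B|^2 = r_A^2 + r_B^2 - 2r_Ar_B\cos(\pi-\alpha) = r_A^2 + r_B^2 + 2r_Ar_B\cos\alpha \;\ge\; r_A^2 + r_B^2,
\]
using $\cos\alpha\ge 0$. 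This is exactly the computation already used in the proof of Lemma~\ref{lem1a}.

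With this replacement in hand, I would then assume for contradiction that $B$ meets $C_Ap$ in two points $q$ and $s$ with $|C_Aq|<|C_As|<r_A$, and let $t$ be their midpoint. Since $t$ lies strictly between $q$ and $s$ on $C_Ap$, I get $|C_At|<r_A$. By Lemma~\ref{lem1a}, $C_B$ lies outside $A$, while the point $q\in B$ lies inside $A$, so $B$ must meet the boundary of $A$ in some point $u$, which is what the previous paragraph needs. Because $qs$ is a chord of $B$, the segment $C_Bt$ is perpendicular to $C_Ap$, so the triangle $C_AtC_B$ is right-angled at $t$ and $|C_AC_B|^2=|C_At|^2+|C_Bt|^2$. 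Moreover, since $t$ lies inside $B$, we have $|C_Bt|<r_B$.

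Combining these three ingredients gives
\[
|C_At|^2 = |C_AC_B|^2 - |C_Bt|^2 \;\ge\; r_A^2 + r_B^2 - |C_Bt|^2 \;>\; r_A^2,
\]
so $|C_At|>r_A$, contradicting the earlier bound $|C_At|<r_A$.

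The only step I anticipate needing care with is the very first one: correctly identifying the interior angle of $C_AuC_B$ at $u$ as $\pi-\alpha$ rather than $\alpha$, so that $\cos\alpha\ge 0$ pushes $|C_AC_B|^2$ upward rather than downward. Since exactly this convention is already used in Lemma~\ref{lem1a}, I do not expect real difficulty, and I note as a bonus that the argument is actually simpler than in the orthogonal case: we no longer have to invoke Thales' theorem, because the inequality $|C_AC_B|^2\ge r_A^2+r_B^2$ already does all of the work that was previously done by comparing the positions of $t$ and $u$ on the Thales circle.
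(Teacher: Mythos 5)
Your proposal is correct and follows essentially the same route as the paper's proof: both locate the intersection point $u$ via Lemma~\ref{lem1a}, apply the law of cosines in the triangle $C_AuC_B$ with interior angle $\pi-\alpha$ at $u$, and use the right angle at the chord midpoint $t$ to force $|C_At|>r_A$. The only cosmetic difference is that you package the law-of-cosines step as the inequality $|C_AC_B|^2\ge r_A^2+r_B^2$ and use $|C_Bt|<r_B$ directly, where the paper keeps the exact identity $r_B^2=\left(\frac{|qs|}{2}\right)^2+|C_Bt|^2$; these are equivalent.
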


\begin{proof}Let $A$ and $B$ are two circles with centers $C_A$ and $C_B$ and radii $r_A$ and $r_B$, respectively. Assume for a contradiction that $p$ is a point on $A$ such that $B$ intersects $C_Ap$ twice. We call these intersection points $q$ and $s$ with $|C_Aq|<|C_As|<r_A$. The midpoint between $q$ and $s$ is denoted by $t$. 

By \autoref{lem1a} the center $C_B$ of the circle $B$ has to be outside of the circle $A$. So for the circle $B$ to have any point in the inside of the circle $A$, the circle $B$ has to intersect the circle $A$ in a point $u$.     
Since the circles $A$ and $B$ intersect in an angle $\alpha\le \pi/2$, we have an angle $\pi-\alpha$ at $u$ between $C_Au$ and $C_Bu$. And since $sq$ is a chord of the circle $B$,  $sqC_B$ spans an isosceles triangle with height $C_Bt$. Thus, we have a right angle at $t$ between $C_Ap$ and $C_Bt$.
It follows that
  \begin{gather*}
   |C_AC_B|^2 =r_A^2 + r_B^2 -2r_ar_b \cos(\pi-\alpha),\\
      |C_AC_B|^2 =|C_At|^2+|C_Bt|^2 \qquad \text{and} \qquad
   r_B^2 =\left(\frac{|qs|}{2}\right)^2+|C_Bt|^2.
  \end{gather*}
We obtain
 \begin{align*}
  &   &|C_AC_B|^2 &= r_A^2 + r_B^2 -2r_ar_b cos(\pi-\alpha) \\
   &\Leftrightarrow& |C_At|^2+|C_Bt|^2 &= r_A^2 + -2r_ar_b \cos(\pi-\alpha) + \left(\frac{|qs|}{2}\right)^2+|C_Bt|^2  \\
   &\Leftrightarrow& |C_At|^2 &= r_A^2 + \left(\frac{|qs|}{2}\right)^2 -2r_ar_b \cos(\pi-\alpha)\\
 \end{align*}
 
Since $0\leq \alpha \leq \frac{\pi}{2}$ it follows that $0 \geq \cos(\pi-\alpha)\geq -1$.
And thus, $ |C_At| > r_A$.
We see that $t$ lies outside of the circle $A$ and not on $C_Ap$, this is a contradiction. So there is no circle $B$ that intersects the the line segment $C_Ap$ twice.
\end{proof}

\begin{lemma}\label{lem3a}
 In an acute nonnested circle arrangement for every intersecting pair of circles $A$ and $B$ there is no third circle $D$ that intersects the line segment between the centers $C_A$ and $C_B$.
\end{lemma}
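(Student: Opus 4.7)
The plan is to follow the structure of the proof of Lemma~\ref{lem3}, replacing the three Pythagorean identities coming from orthogonality with one-sided inequalities obtained from the law of cosines. Let $A$, $B$, $D$ be three circles with centers $C_A,C_B,C_D$ and radii $r_A,r_B,r_D$, and assume for contradiction that $D$ intersects the segment $C_AC_B$ while $A$ and $B$ intersect in some angle $\alpha\le \pi/2$. By Lemma~\ref{lem1a}, neither $C_A$ nor $C_B$ lies in $D$, so the segment is entered and exited by $D$; hence $D$ meets $C_AC_B$ in two points $q,s$ with $|C_Aq|<|C_As|$. Let $t$ denote their midpoint. By Lemma~\ref{lem2a}, $q$ and $s$ cannot both lie in the same circle of $\{A,B\}$, so one lies in $A$ and the other in $B$. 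Consequently $D$ has interior points in both $A$ and $B$, while by Lemma~\ref{lem1a} its center $C_D$ lies outside both, so $D$ must cross $A$ at some point $u_A$ and $B$ at some point $u_B$.

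Now I set up the key inequalities. Let $\beta_A,\beta_B\le \pi/2$ be the intersection angles of $D$ with $A$ and with $B$, and apply the law of cosines in the triangles $C_AC_Du_A$ and $C_BC_Du_B$ at the vertices $u_A,u_B$:
\begin{gather*}
|C_AC_D|^2 = r_A^2 + r_D^2 - 2r_Ar_D\cos(\pi-\beta_A),\\
|C_BC_D|^2 = r_B^2 + r_D^2 - 2r_Br_D\cos(\pi-\beta_B).
\end{gather*}
Since $\beta_A,\beta_B\in[0,\pi/2]$, we have $\cos(\pi-\beta_A),\cos(\pi-\beta_B)\le 0$, so
\[
|C_AC_D|^2 \ \ge\ r_A^2 + r_D^2,\qquad |C_BC_D|^2 \ \ge\ r_B^2 + r_D^2.
\]
Because $qs$ is a chord of $D$ with midpoint $t$, the segment $C_Dt$ is perpendicular to $C_AC_B$ at $t$, giving the three genuine right triangles $C_AC_Dt$, $C_BC_Dt$, $qC_Dt$ and hence $|C_AC_D|^2=|C_At|^2+|C_Dt|^2$, $|C_BC_D|^2=|C_Bt|^2+|C_Dt|^2$ and $|C_Dt|^2=r_D^2-(|qs|/2)^2$.

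Substituting the last identity into the first two and using the two inequalities above, I get
\[
|C_At|^2 \ \ge\ r_A^2 + \left(\tfrac{|qs|}{2}\right)^2 \ >\ r_A^2,
\]
and symmetrically $|C_Bt|^2>r_B^2$. Therefore $|C_At|>r_A$ and $|C_Bt|>r_B$. Since $t$ lies on the segment $C_AC_B$, this yields $|C_AC_B|=|C_At|+|C_Bt|>r_A+r_B$, contradicting the fact that $A$ and $B$ intersect. The main (and only non-routine) point to verify carefully is the sign in the law of cosines: the formula uses the external angle $\pi-\beta_A$ at $u_A$ in the triangle $C_AC_Du_A$, and one must check that the hypothesis $\beta_A\le \pi/2$ forces $\cos(\pi-\beta_A)\le 0$ in the direction that strengthens, rather than weakens, the original orthogonal estimate. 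Once this is nailed down, the rest of the argument is the same as in Lemma~\ref{lem3} with ``='' replaced by ``$\ge$''.
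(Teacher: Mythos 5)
Your proposal is correct and follows essentially the same route as the paper's proof: the same reduction to two intersection points $q,s$ via Lemmas~\ref{lem1a} and~\ref{lem2a}, the same right triangles from the chord midpoint $t$, and the same use of the law of cosines with $\cos(\pi-\beta)\le 0$ to turn the orthogonal identities into the inequalities $|C_At|>r_A$ and $|C_Bt|>r_B$. The sign check you flag is indeed the only delicate point, and you resolve it the same way the paper does.
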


\begin{proof}
Let $A$, $B$ and $D$ be three circles with centers $C_A$, $C_B$ and $C_D$ and radii $r_A$, $r_B$ and $r_D$, respectively. The circles $A$ and $B$ intersect. Assume for
a contradiction that the circle $D$ intersects the line segment between $C_A$ and $C_B$.

A circle can intersect a line once or twice. If the circle $D$ intersects the line segment between $C_A$ and $C_B$ only once, either $C_A$  or $C_B$ would be inside $D$; a contradiction to \autoref{lem1a}. Thus, $D$ has to intersect the line segment $C_AC_B$ twice.
We denote these intersection points $q$ and $s$ with $|C_Aq|<|C_As|<|C_AC_B|$ and the midpoint between $q$ and $s$ by $t$. 
Due to \autoref{lem2a} $q$ and $s$ cannot lie in the same circle, so one lies 
in $A$ and the other in $B$. Thus, $D$ intersects both $A$ and $B$.
By \autoref{lem1a} the center $C_D$ of $D$ has to be outside of the circles $A$ and $B$. So for the circle $D$ to have a point in the inside of the circles $A$ and $B$, the circle $D$ has to intersect the circles $A$  (in some point $u_A$) and $B$ (in some point $u_B$). This is the same situation as in Lemma~\ref{lem3}, which is depicted in \autoref{fig:lem3}.

Since $sq$ is a chord of the circle $D$,  $sqC_D$ spans an isosceles triangle with height $C_Dt$. Thus, we have a right angle at $t$ between $C_AC_B$ and $C_Dt$.
%
%
%
According to the law of cosines we obtain for $0\le \alpha',\alpha''\le \pi/2$:

\begin{gather*}
    |C_AC_D|^2 =r_A^2+r_D^2- 2r_Ar_D \cos(\pi-\alpha'), \quad |C_AC_D|^2 = |C_At|^2+|C_Dt|^2, 
 \\
    |C_BC_D|^2 =r_B^2+r_D^2-2r_Br_D \cos(\pi-\alpha''), \quad |C_BC_D|^2 = |C_Bt|^2+|C_Dt|^2,\\
        r_D^2 =\left(\frac{|qs|}{2}\right)^2+|C_Dt|^2  
  \end{gather*}

Combining these equations yields

 \begin{align*}
	|C_At|^2 & = |C_At|^2 + |C_Dt|^2 - |C_Dt|^2 \\
	&= |C_AC_D|^2 - |C_Dt|^2 =r_A^2+r_D^2 - 2r_Ar_D \cos(\pi-\alpha')- |C_Dt|^2\\
	& = r_A^2 +\left(\frac{|qs|}{2}\right)^2 - 2r_Ar_D \cos(\pi-\alpha'). 
\end{align*}
Since $0\leq \alpha' \leq \frac{\pi}{2}$ it follows that $0 \geq \cos(\pi-\alpha')\geq -1$
and therefore $|C_At| > r_A$. By a symmetric argument we see also that $|C_Bt| > r_B$.
We get $|C_At|+|C_Bt| > r_A + r_B$, which is a contradiction.
\end{proof}

We can now combine our observations to prove the following result. The proof is identical 
to the proof of Theorem~\ref{thm:main} with the strengthened lemmas. We repeat it 
here for completeness.

\begin{theorem}\label{thm:angles}
 The embedded intersection graph of an acute nonnested circle arrangement is noncrossing.
\end{theorem}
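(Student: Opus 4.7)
The plan is to mirror the proof of Theorem~\ref{thm:main} essentially verbatim, using the strengthened Lemmas~\ref{lem1a}, \ref{lem2a}, and~\ref{lem3a} in place of their orthogonal counterparts. Since Theorem~\ref{thm:main}'s argument only invokes Lemmas~\ref{lem1} and~\ref{lem3} (and those in turn depend on Lemma~\ref{lem2}), and since each of those has now been upgraded to the acute setting, no new geometric insight should be required.

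Concretely, I would argue by contradiction: suppose the embedded intersection graph has a crossing, realized by two edges $C_AC_B$ and $C_DC_E$ of four circles $A,B,D,E$ meeting at a point $h$. First I would observe that the segment $C_AC_B$ is covered by the union $A\cup B$ (because $A$ and $B$ intersect, so every point of $C_AC_B$ lies within distance $r_A$ of $C_A$ or within distance $r_B$ of $C_B$), hence $h$ lies in $A$ or in $B$; symmetrically $h$ lies in $D$ or in $E$. Without loss of generality assume $h\in D$. By Lemma~\ref{lem1a}, the circle $D$ cannot contain either endpoint $C_A$ or $C_B$, so $D$ meets the segment $C_AC_B$ transversally; this contradicts Lemma~\ref{lem3a}.

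There is no genuine obstacle here; the only thing worth double-checking is that the covering statement \enquote{$C_AC_B\subseteq A\cup B$} still holds in the acute case. This is purely a consequence of $A$ and $B$ intersecting (so the disks they bound overlap and their union covers the line segment joining their centers), and does not use orthogonality at all. Thus the argument transfers without change, and the theorem follows.
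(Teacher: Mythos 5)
Your proposal is correct and follows the paper's own proof of this theorem essentially verbatim: reduce to the strengthened Lemmas~\ref{lem1a} and~\ref{lem3a}, note that the segment $C_AC_B$ is covered by $A\cup B$ so the crossing point lies in one of the four circles, and derive a contradiction because that circle can neither contain a center nor cross the segment between the centers of an intersecting pair. Your explicit justification of the covering claim (via $|C_AC_B|\le r_A+r_B$) is a small but welcome addition; no gaps.
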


\begin{proof}
Suppose for contradiction that the embedded intersection graph has two edges $C_AC_B$ and $C_CC_D$ that cross in the point $h$. 
This means we have two pairs of intersecting circles $A,B$ and $C,D$, with corresponding centers $C_A,C_B,C_C,C_D$.
Note that  $C_AC_B$ is contained in the union of the convex hulls of $A$ and $B$. 
Hence, $h$ has to lie in at least one of the circles $A$ or $B$. By the same reasoning
$h$ also has to lie in at least one of the circles $C$ or $D$. Without loss of generality
we can assume that $h$ lies in $C$.
By \autoref{lem1a} the circle $C$ cannot enclose $C_AC_B$ completely, thus it has to intersect the line segment $C_AC_B$. This, however, contradicts \autoref{lem3a}.
\end{proof}

An immediate consequence of Theorem~\ref{thm:angles} and Euler's formula is the following
corollary. 

\begin{corollary}
	The embedded intersection graph of $n$ acute nonnested circles has at most $3n-6$ edges. 
\end{corollary}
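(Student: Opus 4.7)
The plan is to derive this corollary as an essentially immediate consequence of Theorem~\ref{thm:angles} combined with Euler's polyhedral formula. By construction, the embedded intersection graph is a straight-line drawing: vertices are placed at the circle centers, and each edge is the straight segment between the centers of an intersecting pair of circles. Theorem~\ref{thm:angles} guarantees that no two such segments cross, so this drawing realizes the intersection graph as a plane graph.

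Once planarity is in hand, the bound is routine. The graph is simple, because in a nonnested arrangement two distinct circles must have distinct centers (concentric circles would either coincide or be nested), and by construction each unordered pair of circles contributes at most one edge. For $n\ge 3$ the standard consequence of Euler's formula for simple planar graphs --- every face is bounded by at least three edges, so $2|E|\ge 3|F|$, combined with $|V|-|E|+|F|\ge 2$ --- yields $|E|\le 3n-6$. For $n\le 2$ the bound holds trivially. Note that we do not need the graph to be connected, since the inequality $|E|\le 3n-6$ still holds (in fact with slack) for disconnected simple planar graphs.

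There is no real obstacle here; the substantive work has already been done in Theorem~\ref{thm:angles}. It is however worth remarking that, unlike in the orthogonal nonnested case of Corollary~\ref{cor:1}, we do not obtain the sharper bound $3n-8$. The pentagon boundary established in Lemma~\ref{lem:pentagon} relied on the fact that three pairwise \emph{orthogonal} circles cover the whole triangle spanned by their centers (Lemma~\ref{lem:three}), and this covering property fails for generic acute intersection angles. Consequently $3n-6$ is the strongest bound we can extract from planarity alone in the acute setting.
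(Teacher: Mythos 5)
Your proposal is correct and matches the paper's (implicit) argument exactly: Theorem~\ref{thm:angles} gives a plane straight-line drawing of the intersection graph, and Euler's formula for simple planar graphs yields $3n-6$. Your closing remark on why the sharper $3n-8$ bound does not carry over is also consistent with the paper's observation, though the paper attributes the failure to Lemma~\ref{K4C4} (a $K_4$ is realizable by touching disks) rather than to Lemma~\ref{lem:three}.
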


Note that the slightly stronger statement of~\autoref{cor:1} does not carry over since
Lemma~\ref{K4C4} does not hold for intersections with acute angles. In particular, the
$K_4$ has a contact representation by touching disks by the circle packing theorem.

\section{Explicit construction of the lower bound examples}
\label{app:lower}
We give in this section a formal construction of the arrangement $\mathcal B_{x,a}$
including a proof of its orthogonality.

We define the following constants and points in the plane for later reference.
\begin{itemize}[noitemsep]
 \item $\alpha = \frac{\sqrt{\cos{\left( \frac{2 \ensuremath{\pi} }{a}\right) }-\cos{\left( \frac{4 \ensuremath{\pi} }{a}\right) }}+\sqrt{2} \cos{\left( \frac{\ensuremath{\pi} }{a}\right) }}{\sqrt{2} \cos{\left( \frac{2 \ensuremath{\pi} }{a}\right) }}$
 \item $d_i=  \frac{\alpha^{i-1}}{\sqrt{2}\cdot \sin\left(\frac{\pi}{a}\right)}$
 \item $s_i=\alpha^{i-1} $
 \item $C_{i,j}= \begin{cases}
    \left(d_i,\frac{2\pi \cdot j}{a}\right), & \text{if } i \text{ is even} \\
    \left(d_i,\frac{2\pi \cdot j}{a}+\frac{\pi}{a}\right), & \text{else}
    \end{cases}$
 \item $h_i=s_i \cdot \sqrt{\frac{1}{{2\cdot \sin\left(\frac{\pi}{a}\right)^2}}-1}$    
\end{itemize}

The construction is guided by a set of $x$ concentric circles $O_i$ centered at the origin of radii 
$d_i$ for $1\leq i \leq x$; we refer to these circles as \emph{orbits}. These circles 
 will not be part of the final arrangement. On each of the orbit  circles $O_i$ 
 we place the centers of $a$ circles $S_{i,j}$ for $1\leq j \leq a$ with radius $s_i$ 
 such that the centers are equidistant; we refer to these as \emph{satellite circles}. 
The center of a circle $S_{i,j}$ for $1\leq i \leq x$, $1\leq j \leq a$ is $C_{i,j}$
As last step we add $x$ concentric circles $H_i$ with center on the origin and 
radii $h_i$ for $1\leq i \leq x$; we refer to these circles as \emph{hub circles}. 
As before we call a hub circle together with the satellite circles it intersects
a \emph{wheel of circles}. 
The satellite circles and the hub circles form the arrangement $\mathcal B_{x,a}$.

We prove now that this arrangement is an orthogonal circle arrangement.

\begin{lemma}
 The arrangement $\mathcal B_{x,a}$ is an orthogonal circle arrangement.
\end{lemma}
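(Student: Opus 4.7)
By Observation~\ref{distance}, two circles with centers $P_1,P_2$ and radii $r_1,r_2$ are orthogonal if and only if $|P_1P_2|^2 = r_1^2+r_2^2$. The plan is to enumerate every pair of circles in $\mathcal{B}_{x,a}$ and for each pair either verify this identity (when the two intersect) or show that the two circles are nested or disjoint. I split the pairs by type: hub--hub, hub--satellite of the same wheel, hub--satellite of different wheels, satellite--satellite within a wheel, and satellite--satellite across wheels (adjacent or not).

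The easy cases I would handle as follows. Two distinct hub circles are concentric at the origin, so they are nested and no orthogonality check is needed. For a hub $H_i$ and a satellite $S_{i,j}$ of the same wheel, the squared center distance is $d_i^2 = \alpha^{2(i-1)}/(2\sin^2(\pi/a))$, which equals $h_i^2+s_i^2$ directly from the definition of $h_i$. For two satellites on the same orbit $i$ whose angular separation is $2\pi k/a$, the chord distance is $2d_i\sin(\pi k/a)$; substituting $d_i = s_i/(\sqrt{2}\sin(\pi/a))$ yields squared distance $2s_i^2\bigl(\sin(\pi k/a)/\sin(\pi/a)\bigr)^2$. For $k=1$ this equals $2s_i^2 = s_i^2+s_i^2$, proving orthogonality of intra-wheel neighbors; for $k\ge 2$ the inequality $\sin(\pi k/a) > \sqrt{2}\sin(\pi/a)$ (tightest at $k=2$, where it reduces to $\cos(\pi/a) > \sqrt{2}/2$, holding since $a\ge 5$) shows non-intersection. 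Non-intersection for a hub and a satellite of a different wheel, and for any satellites lying in non-adjacent wheels, I would then derive from the same cosine-law estimates combined with monotonicity in $\alpha$, after first extracting the auxiliary fact $\alpha>1$ from the displayed closed form.

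The heart of the argument, and the main obstacle, is verifying orthogonality between two satellites $S_{i,j},S_{i+1,j'}$ of consecutive wheels whose center angles differ by $\pi/a$ (half the intra-wheel spacing, which is achievable precisely because consecutive wheels have opposite parities in the definition of $C_{i,j}$). By the law of cosines,
\[
  |C_{i,j}-C_{i+1,j'}|^2 = d_i^2 + d_{i+1}^2 - 2d_id_{i+1}\cos(\pi/a).
\]
Setting this equal to $s_i^2+s_{i+1}^2 = s_i^2(1+\alpha^2)$, substituting $d_{i+1}=\alpha d_i$, $d_i^2/s_i^2 = 1/(2\sin^2(\pi/a))$, and using $2\sin^2(\pi/a) = 1-\cos(2\pi/a)$, I would reduce the equation to the quadratic
\[
  \alpha^2\cos(2\pi/a) - 2\alpha\cos(\pi/a) + \cos(2\pi/a) = 0.
\]
Applying the quadratic formula together with the identity $\cos^2(\pi/a)-\cos^2(2\pi/a) = \tfrac{1}{2}\bigl(\cos(2\pi/a)-\cos(4\pi/a)\bigr)$ yields the larger root
\[
  \alpha = \frac{\sqrt{2}\cos(\pi/a)+\sqrt{\cos(2\pi/a)-\cos(4\pi/a)}}{\sqrt{2}\cos(2\pi/a)},
\]
which matches the defining formula. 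This identifies $\alpha$ as exactly the value that forces cross-wheel neighbors to be orthogonal, finishing the last remaining case. The complete verification of all pair types then establishes that $\mathcal{B}_{x,a}$ is an orthogonal arrangement.
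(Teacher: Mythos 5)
Your proposal follows the same basic strategy as the paper's proof: reduce every pair of circles to the distance criterion of Observation~\ref{distance} and check the pair types one by one. Two points of comparison are worth making. First, your treatment of the crucial cross-wheel case is cleaner than the paper's: where the paper states the identity $\alpha^2+1=\bigl(\alpha^2-2\cos(\pi/a)\,\alpha+1\bigr)/\bigl(2\sin^2(\pi/a)\bigr)$ and reports that it was checked with computer algebra software, you derive $\alpha$ as the larger root of the quadratic $\cos(2\pi/a)\,t^2-2\cos(\pi/a)\,t+\cos(2\pi/a)=0$ and recover the closed form via $\cos^2(\pi/a)-\cos^2(2\pi/a)=\tfrac{1}{2}\bigl(\cos(2\pi/a)-\cos(4\pi/a)\bigr)$; this computation is correct and makes the origin of the constant transparent. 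Your explicit non-intersection check for same-orbit satellites at angular distance $2\pi k/a$ with $k\ge 2$ (reducing to $\cos(\pi/a)>\sqrt{2}/2$, i.e.\ $a\ge 5$) also covers a case the paper leaves implicit. Second, however, the remaining non-intersection cases --- a hub circle against a satellite of a different wheel, satellites of adjacent wheels at angular offset other than $\pi/a$, and satellites of non-adjacent wheels --- are only announced (\enquote{the same cosine-law estimates combined with monotonicity in $\alpha$}) rather than carried out, and $\alpha>1$ alone does not obviously close them. For the hub circles the paper takes a genuinely different and slicker route: it inverts in a circle centered at an intersection point of two neighboring satellites, turning them into orthogonal lines, and observes that the images of the hub circle and of the foreign circle are nested. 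You should either adopt that inversion argument or actually supply the estimates; as written, the heart of the lemma is proved, but the \enquote{disjoint or nested} half of the case analysis remains a sketch.
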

\begin{proof}
 We prove the lemma by checking the following conditions.
 \begin{enumerate}
 	\item {\bf In every wheel of circles the satellite circles intersect each other orthogonally.}  	
 	Let us first concentrate on the innermost wheel of circles.
Consider the the equilateral $a$-gon formed by the centers of the satellite circles $C_{1,1},...,C_{1,a}$. An edge length of the $a$-gon is (here between $C_{1,j}$ and $C_{1,j+1}$

\begin{align*}
D &=\sqrt{2\cdot d_1^2 - 2 \cdot d_1^2 \cdot \cos\left(\frac{2\pi \cdot (j+1)}{a}-\frac{2\pi \cdot j}{a}\right)} \\
 &=\sqrt{2\cdot \left(\frac{1}{\sqrt{2}\cdot \sin\left(\frac{\pi}{a}\right)}\right)^2 \left(1- \cos\left(\frac{2\pi}{a}\right)\right)} \\
 &= \sqrt{\frac{1- \cos\left(\frac{2\pi}{a}\right)}{\sin\left(\frac{\pi}{a}\right)^2} } \\
 &= \sqrt 2
\end{align*}

Since $s_1=1$, the radii of the satellite circles is $1$. Thus every two neighboring circles intersect orthogonally. Note that all other wheel of circles are just scaled up copies of the innermost
wheel of circles. Hence, also here the intersections are orthogonally.
\item{\bf In a wheel of circles the hub circle intersects every satellite circle orthogonally.}
Due to rotational symmetry it suffices to prove the statement for one satellite circle. 
We consider now the intersection point of a satellite circle with a hub circle. The radius
of the hub circle is $h_i$ and the radius of the satellite circles is $s_i$. It holds that
$$ h_i^2 + s_i^2 =
\left(s_i \cdot \sqrt{\frac{1}{{2\cdot \sin\left(\frac{\pi}{a}\right)^2}}-1}\right)^2 + s_i^2
= \left(\frac{\alpha^{i-1}}{\sqrt{2}\cdot \sin\left(\frac{\pi}{a}\right)}\right)^2=d_i^2.$$
Since $d_i$ is the distance between between the center of the satellite circle and the center
of the hub circle (origin) we get by \autoref{distance} that both circles intersect orthogonally.
 	
\item {\bf The satellite circles of two adjacent wheels of circles intersect orthogonally. }
Due to symmetry it suffices to check the condition between the circles with center 
$C_{1,1}$ and $C_{2,1}$. 
%
%
%
It holds that 
$$\lVert C_{1,1},C_{2,1}\rVert_2^2= {d_2}^{2}-2 \cos{\left( \frac{\ensuremath{\pi} }{a}\right) } d_1\cdot d_2+{d_1}^{2} = \frac{{{\alpha}^{2}}-2 \cos{\left( \frac{\ensuremath{\pi} }{a}\right) } \alpha+1}{2 {{\sin{\left( \frac{\ensuremath{\pi} }{a}\right) }}^{2}}}
$$

If we can show that $\lVert C_{1,1},C_{2,1}\rVert_2^2={s_1}^2+{s_2}^2=1+\alpha^2$, then by 
\autoref{distance} the orthogonality is proven. With computer algebra software we have checked that
indeed
\[{{\alpha}^{2}}+1=\frac{{{\alpha}^{2}}-2 \cos{\left( \frac{\ensuremath{\pi} }{a}\right) } \alpha+1}{2 {{\sin{\left( \frac{\ensuremath{\pi} }{a}\right) }}^{2}}}.\]
\item {\bf No hub circle intersects any circle that is not part of its wheel of circles.} 
Let $H$ be a hub circle and let $W_1$ and $W_2$ be two neighboring circles from the same wheel of circles. Further
 let $C$ be a circle from an adjacent wheel of circles that intersect $W_1$ and $W_2$.
 We invert all four circles in a circle with center on one of the intersection points of $W_1$ and $W_2$. This turns $W_1$ and $W_2$ into straight lines that intersect orthogonally and the circles $H$ and $C$ into circles that intersect these straight lines orthogonally . 
 The inversions of $H$ and $C$ are nested and thus do not intersect. Thus, also $H$ and $C$ do not intersect.
 \end{enumerate}
 Since these four conditions hold, all intersection points in the arrangement belong to 
 circles that intersect orthogonally.	
\end{proof}

\end{document}